\newtheorem{theorem}{Theorem}
\newtheorem{lemma}{Lemma}
\newtheorem{corollary}{Corollary}
\newtheorem{definition}{Definition}
\newtheorem{remark}{Remark}
\numberwithin{equation}{section}
\theoremstyle{plain}
\theoremstyle{remark}
\begin{document}

\begin{frontmatter}
\title{Projection Theorems of Divergences and Likelihood Maximization Methods}
\runtitle{Projection Theorems of Divergences and Likelihood Maximization Methods}

\begin{aug}
\author{\fnms{ATIN} \snm{GAYEN}\ead[label=e1]{}}
\and
\author{\fnms{M. Ashok} \snm{Kumar}\ead[label=e2]{}}


\address{Discipline of Mathematics, Indian Institute of Technology Indore, Indore 453 553.
\printead{}}

\runauthor{ATIN GAYEN AND ASHOK KUMAR}

\affiliation{Discipline of Mathematics, Indian Institute of Technology Indore}
\address{Email: atinfordst@gmail.com \and ashokm@iiti.ac.in}

\end{aug}
\maketitle

\begin{abstract}{\textbf{Abstract:}\\\\}	
Projection theorems of divergences enable us to find reverse projection of a divergence on a specific statistical model as a forward projection of the divergence on a different but rather ``simpler'' statistical model, which, in turn, results in solving a system of linear equations. Reverse projection of divergences are closely related to various estimation methods such as the maximum likelihood estimation or its variants in robust statistics. We consider projection theorems of three parametric families of divergences that are widely used in robust statistics, namely the R\'enyi divergences (or the Cressie-Reed power divergences), density power divergences, and the relative $\alpha$-entropy (or the logarithmic density power divergences). We explore these projection theorems from the usual likelihood maximization approach and from the principle of sufficiency. In particular, we show the equivalence of solving the estimation problems by the projection theorems of the respective divergences and by directly solving the corresponding estimating equations. We also derive the projection theorem for the density power divergences.
\end{abstract}

\begin{keyword}
\kwd{Estimating equations, power divergence, power-law family, projection theorem, relative $\alpha$-entropy, R\'enyi divergence, robust estimation, sufficient statistics}
\end{keyword}

\end{frontmatter}

\section{Introduction}
\label{1sec:introduction}
Minimum divergence\footnote{By a divergence, we mean a non-negative extended real valued function defined for every pair of probability measures $(P,Q)$ satisfying $D(P,Q) = 0$ if and only if $P = Q$.} (or minimum distance) method has a unique place in statistical inference because
of its ability to trade-off between efficiency and robustness \cite{BasuSP11B,Pardo06B}. Minimization of Kullback-Leibler divergence ($I$-divergence) or relative entropy is closely related to the maximum likelihood estimation (MLE) \cite[Lemma~3.1]{CsiszarS04B}. MLE is not a preferred method when the data set is contaminated by some unexpected sample points called {\em outliers}. However, $I$-divergence can be generalized, replacing the logarithmic function by some power function, to produce divergences that are more robust with respect to outliers \cite{BasuHHJ98J, JonesHHB01J}. In this paper we consider three such families of divergences that are well-known in the context of robust statistics. They are defined as follows.

Let $\mathcal{X}$ be a finite set. Let $\mathcal{P} := \mathcal{P}(\mathcal{X})$ be the space of all strictly positive probability measures\footnote{$\mathcal{P}$ can be just thought of the set of all ordered $|\mathcal{X}|$-tuples with strictly positive components that sum to one.} on $\mathcal{X}$. Let $\alpha>0, \alpha\neq 1$, and let $P, Q\in \mathcal{P}$.
\begin{enumerate}
	\item The $D_\alpha$-divergence\footnote{Upto a monotone function, same as the Cressie-Read power divergence \cite{CressieR84J}.} (also known as R\'enyi divergence \cite{Renyi61J} or {\em power divergence} \cite{PatraMPB13J}):
	\begin{eqnarray}
	\label{1defn:D_alpha_divergence}
	D_\alpha(P,Q) := \frac{1}{\alpha-1}\log\sum\limits_{x\in\mathcal{X}}P(x)^\alpha Q(x)^{1-\alpha}.
	\end{eqnarray}
	\item The $B_\alpha$-divergence (also known as {\em density power divergence} \cite{BasuHHJ98J}):
	\begin{eqnarray}
	\label{1defn:B_alpha_divergence}
	B_{\alpha}(P,Q) := \frac{\alpha}{1 - \alpha}\sum\limits_{x\in \mathcal{X}}P(x)Q(x)^{\alpha -1} - \frac{1}{1 - \alpha}\sum\limits_{x\in \mathcal{X}}P(x)^{\alpha} + \sum\limits_{x\in \mathcal{X}}Q(x)^{\alpha}.
	\end{eqnarray}
	\item The $\mathscr{I}_\alpha$-divergence \cite{Sundaresan02ISIT}, \cite{Sundaresan07J}, \cite{LutwakYZ05J}, \cite{FujisawaE08J} (also known as {\em relative $\alpha$-entropy} \cite{KumarS15J1}, \cite{KumarS15J2}, {\em logarithmic density power divergence} \cite{MajiGB16J}):
	\begin{eqnarray}
	\label{1defn:I_alpha_divergence}
	\lefteqn{\mathscr{I}_\alpha(P,Q)}\nonumber\\
	&& \hspace*{-0.4cm} := \frac{\alpha}{1-\alpha}\log\sum\limits_{x\in\mathcal{X}}P(x)Q(x)^{\alpha-1}
	-\frac{1}{1-\alpha}\log\sum\limits_{x\in\mathcal{X}}P(x)^\alpha+\log\sum\limits_{x\in\mathcal{X}}
	Q(x)^\alpha.
	\end{eqnarray}
\end{enumerate}
Throughout the paper $\log$ stands for the natural logarithm. It should be noted that, although the above divergences are not defined for $\alpha=1$, they all coincide with the $I$-divergence as $\alpha\rightarrow 1$ \cite{CichockiA10J}. That is,
\begin{eqnarray}
\label{1eqn:kullback-leibler}
\displaystyle\lim_{\alpha\rightarrow 1} B_\alpha(P,Q) = \displaystyle\lim_{\alpha\rightarrow 1} \mathscr{I}_\alpha(P,Q) = \displaystyle\lim_{\alpha\rightarrow 1} D_\alpha(P,Q)= I(P,Q) := \sum\limits_{x\in\mathcal{X}}P(x)\log\frac{P(x)}{Q(x)}.
\end{eqnarray}
In this sense each of these three classes of divergences can be regarded as a generalization of the $I$-divergence.

$D_\alpha$-divergences also arise as generalized cut-off rates in information theory \cite{Csiszar95J1}. The $B_\alpha$-divergences belong to the Bregman class which is characterized by transitive projection rules (see \cite[Eq.~(3.2) and Th.~3]{Csiszar91J}). The $\mathscr{I}_\alpha$-divergence (for $\alpha < 1$) arises in information theory as redundancy in mismatched guessing moments \cite{Sundaresan02ISIT}, in mismatched compression \cite{KumarS15J2}, and in mismatched encoding of tasks \cite{BunteL14J}. The three classes of  divergences are associated with robust inference for $\alpha>1$ in case of $B_\alpha$ and $\mathscr{I}_\alpha$, and $\alpha<1$ in case of $D_\alpha$, as we shall see now.

Suppose that $X_1,\dots, X_n$ are an independent and identically distributed (i.i.d.) random sample drawn according to a particular member of a parametric family of probability measures, $\Pi = \{P_\theta : \theta\in\Theta\}\subset \mathcal{P}$, where $\Theta$ is an open subset of $\mathbb{R}^k$. To find the MLE of the parameter $\theta$, one needs to solve the so-called {\em score equation} or {\em estimating equation} for $\theta$, given by
\begin{eqnarray}
\label{1eqn:score_equation_mle_in_terms_of_sample}
\sum\limits_{j=1}^n s(X_j; \theta) = 0,
\end{eqnarray}
where $s(x;\theta) := \nabla\log P_\theta(x)$, called the {\em score function} and $\nabla$ stands for gradient with respect to $\theta$. Observe that (\ref{1eqn:score_equation_mle_in_terms_of_sample}) can be re-written as
\begin{eqnarray}
\label{1eqn:score_equation_mle}
\sum\limits_{x\in\mathcal{X}}\widehat{P}(x)s(x;\theta) = 0,
\end{eqnarray}
where $\widehat{P}(\cdot)$ is the empirical probability measure of the sample $X_1,\dots, X_n$.

In the presence of outliers in the observed sample, one modifies the score equation by scaling the score function in (\ref{1eqn:score_equation_mle}) by weights that down-weights the effect of outliers. The following estimating equation, referred as generalized Hellinger estimating equation, was proposed where the score function was weighted by $\widehat{P}(\cdot)^\alpha P_\theta(\cdot)^{1-\alpha}$ instead of $\widehat{P}(\cdot)$:
\begin{eqnarray}
\label{1eqn:score_equation_for_D_alpha}
\sum\limits_{x\in\mathcal{X}}\widehat{P}(x)^\alpha P_\theta (x)^{1-\alpha}s(x;\theta) = 0,
\end{eqnarray}
where $\alpha\in (0,1)$. The above estimating equation was proposed based on the following intuition. If $x$ is an outlier in the sample, $\widehat{P}(x)^\alpha$ is larger while $P_\theta(x)^{1-\alpha}$ is smaller for sufficiently smaller value of $\alpha$. Hence the terms in (\ref{1eqn:score_equation_for_D_alpha}) corresponding to outliers are down-weighted. (See \cite[Sec. 4.3]{BasuSP11B} and the references therein.)
The following estimating equation, where the score function is weighted by power of model probability measure and equated to its hypothetical one, was proposed by Basu {\em et. al.} \cite{BasuHHJ98J}:
\begin{eqnarray}
\label{1eqn:score_equation_B_alpha}
\sum\limits_{x\in\mathcal{X}}\widehat{P}(x) P_{\theta}(x)^{\alpha-1} s(x,\theta) = \sum\limits_{x\in\mathcal{X}}P_{\theta}(x)^{\alpha} s(x,\theta),
\end{eqnarray} 
where $\alpha>1$. Motivated by the works of Field and Smith \cite{FieldS94J} and Windham \cite{Windham95J}, further
an alternative estimating equation was proposed by Jones {\em et. al.} \cite{JonesHHB01J} where
the weights in (\ref{1eqn:score_equation_B_alpha}) were normalized:
\begin{eqnarray}
\label{1eqn:score_equation_I_alpha}
\dfrac{\sum\limits_{x\in\mathcal{X}} \widehat{P}(x) P_\theta(x)^{\alpha-1}s(x;\theta)}{\sum\limits_
	{x\in\mathcal{X}}\widehat{P}(x) P_\theta(x)^{\alpha-1}}= \dfrac{\sum\limits_{x\in\mathcal{X}}P_\theta(x)^\alpha s(x;\theta)}{\sum\limits_{x\in\mathcal{X}}P_\theta(x)^\alpha},
\end{eqnarray}
where $\alpha>1$. It should be noted that all the three estimating equations coincide with the usual score equation (\ref{1eqn:score_equation_mle}) when $\alpha=1$,
since $\sum_x P_\theta(x)s(x;\theta)=0$. The estimating equations (\ref{1eqn:score_equation_mle}), (\ref{1eqn:score_equation_for_D_alpha}), (\ref{1eqn:score_equation_B_alpha}),
and (\ref{1eqn:score_equation_I_alpha}) are, respectively, associated with the divergences in (\ref{1eqn:kullback-leibler}), (\ref{1defn:D_alpha_divergence}), (\ref{1defn:B_alpha_divergence}), and (\ref{1defn:I_alpha_divergence}) in a sense that will be made clear in the following.

Observe that the estimating equations (\ref{1eqn:score_equation_mle}), (\ref{1eqn:score_equation_for_D_alpha}), (\ref{1eqn:score_equation_B_alpha}),
and (\ref{1eqn:score_equation_I_alpha}) are implications of the first order optimality condition of,
respectively, the usual log-likelihood function and the following modified likelihood functions,
\begin{equation}
\label{1eqn:likelihood_function_for_D_alpha}
L_1^{(\alpha)}(\theta) := \dfrac{1}{1 - \alpha}\log\Big[\sum\limits_{x\in\mathcal{X}}\widehat{P}(x)^\alpha P_\theta
(x)^{1-\alpha}\Big],
\end{equation}
\begin{equation}
\label{1eqn:likelihood_function_for_B_alpha}
L_2^{(\alpha)}(\theta) := \dfrac{1}{n}\sum\limits_{j=1}^n\left[\dfrac{\alpha P_\theta(X_j)^{\alpha-1}-1}{\alpha-1}\right] - \sum\limits_{x\in\mathcal{X}}P_\theta(x)^\alpha,
\end{equation}
and
\begin{equation}
\label{1eqn:likelihood_function_for_I_alpha}
L_3^{(\alpha)}(\theta) := \dfrac{\alpha}{\alpha-1}\log\bigg[\dfrac{1}{n}\sum\limits_{j=1}^n P_\theta(X_j)^{\alpha-1}\bigg] - \log\Big[\sum\limits_{x\in\mathcal{X}}P_\theta(x)^\alpha\Big].
\end{equation}
Although the above likelihood functions (\ref{1eqn:likelihood_function_for_D_alpha}), (\ref{1eqn:likelihood_function_for_B_alpha}) and (\ref{1eqn:likelihood_function_for_I_alpha}) are not defined for $\alpha=1$, it can be shown that they all coincide with the following usual log-likelihood function as $\alpha\to 1$:
\begin{eqnarray}
\label{1eqn:log_likelihood_function}
L(\theta) := \frac{1}{n}\sum\limits_{j=1}^n\log P_\theta(X_j).
\end{eqnarray}
Moreover, it is easy to see that the $P_\theta$ that maximizes the likelihood function in (\ref{1eqn:likelihood_function_for_D_alpha}), (\ref{1eqn:likelihood_function_for_B_alpha}), (\ref{1eqn:likelihood_function_for_I_alpha}) or (\ref{1eqn:log_likelihood_function}) is same as the one that minimizes $D_\alpha(\widehat{P}, P_\theta)$, $B_\alpha(\widehat{P}, P_\theta)$, $\mathscr{I}_\alpha(\widehat{P}, P_\theta)$ or $I(\widehat{P}, P_\theta)$ respectively. Thus, for MLE or ``robustified MLE", one
needs to solve
\begin{eqnarray}
\label{1eqn:reverse_projection}
 \min_{P_\theta\in \Pi} D(\widehat{P},P_\theta),
\end{eqnarray}   
where $D$ is either $I$, $D_\alpha$, $B_\alpha$ or $\mathscr{I}_\alpha$. The minimizing probability measure $P_{\theta^*}$ (if exists and unique) is known as the {\em reverse $D$-projection} of $\widehat{P}$ on $\Pi$.

A ``dual'' minimization problem is the so-called {\em forward projection} problem, where the minimization is over the first argument of the divergence function. Given $\mathbb{C}\subset \mathcal{P}$, and $Q\in\mathcal{P}$, any $P^*\in \mathbb{C}$ that attains 
\begin{eqnarray}
\label{1eqn:forward_projection}
 \min_{P\in \mathbb{C}} D(P,Q)
\end{eqnarray}
is called a forward $D$-projection of $Q$ on $\mathbb{C}$. Forward projection is usually on a convex set or on an $\alpha$-convex set of probability measures. Forward projection on a convex set is motivated by the well-known {\em maximum entropy principle} of statistical physics \cite{Jaynes82J}. Motivation for forward projection on $\alpha$-convex set comes from the so-called {\em non-extensive} statistical physics \cite{Tsallis88J, TsallisMP98J, KumarS15J1}. Forward I-projection on convex sets was extensively studied by Csisz\'ar \cite{Csiszar75J, Csiszar84J, Csiszar95J2}, Csisz\'ar and Mat\'u\v{s} \cite{CsiszarM12J}, \cite{CsiszarM03J}, Csisz\'ar and Shields \cite{CsiszarS04B}, and Csisz\'ar and Tusn\'ady \cite{CsiszarT84J}.

Csisz\'ar and Shields showed that the reverse I-projection on an {\em exponential family} is same as the forward I-projection on a {\em linear family}\footnote{See Definition \ref{1defn:linear_family}.}, which, in turn, is a solution to a system of linear equations \cite[Th. 3.3]{CsiszarS04B}. We call such a result a {\em projection theorem} for the associated divergence. Projection theorem for I-divergence was due to an ``orthogonal'' relationship between the exponential family and the linear family. Projection theorem for $\mathscr{I}_\alpha$-divergence was established by Kumar and Sundaresan where the so-called {\em $\alpha$-power-law family} plays the role of exponential family \cite[Th.~18 and Th.~21]{KumarS15J2}. Projection theorem for $D_\alpha$-divergence was established by Kumar and Sason where a variant of the $\alpha$-power-law family, called an {\em $\alpha$-exponential family}, plays the role of exponential family and the so-called {\em $\alpha$-linear family} plays the role of linear family \cite[Th.~6]{KumarS16J}. Projection theorem for the more general class of Bregman divergences, where $B_\alpha$ is a subclass, was established by Csisz\'ar and Mat\'u\v{s} \cite{CsiszarM12J} using techniques from convex analysis. One of our goals in this paper is to derive the projection theorem for the $B_\alpha$-divergence using elementary tools. We also identify the parametric family of probability measures associated with the projection theorem of $B_\alpha$-divergence, which is yet another generalization of the exponential family. In all these projection theorems, the {\em Pythagorean theorem} of the respective divergence plays a key role. Thus projection theorems enable us to find the estimator (whether MLE or robustified MLE) as a forward projection if the estimation is done on a specific parametric family. While for MLE the required family is the exponential family, for robustified MLE, it is one of the generalized exponential families.

\vspace*{0.2cm}
Our contributions in this paper are the following.

\begin{itemize}
	
	\item[(a)] We show that the generalized Hellinger estimating equation (\ref{1eqn:score_equation_for_D_alpha}) and the estimating equation of Jones et al. (\ref{1eqn:score_equation_I_alpha}) are equivalent under a transformation.
	\vspace*{0.2cm}
	
	\item[(b)] We show the equivalence of the two methods of solving the estimation problems: the one by the projection theorem and the other by directly solving the corresponding estimating equation.
	\vspace*{0.2cm}
	
	\item[(c)] We show that the statistics of the data that have bearing on the projection theorem (that is, projection equation) are the sufficient statistics of the underlying statistical model with respect to the associated likelihood function.
	\vspace*{0.2cm}
	
	\item[(d)] We derive the projection theorem for the $B_\alpha$-divergence.

	
\end{itemize}

Rest of the paper is organised as follows. The contributions mentioned in (a) and (b) are covered in section \ref{1sec:Relationship_Between_Projection_Theorems_and_Estimating_equations}. The contribution mentioned in (c) is covered in section \ref{1sec:Projection_theorems_sufficiency_principle}. The contribution in (d) is given in Appendix \ref{1sec:the_projection_geometry_of_density_power_divergence}. The paper ends with a summary in section \ref{1sec:summary}.

\section{Relationship Between Projection Theorems and Estimating Equations}
\label{1sec:Relationship_Between_Projection_Theorems_and_Estimating_equations}
We now show that the solutions of the estimation problems obtained using the projection theorems and those obtained by directly solving the estimating equations are the same. Recall that, projection theorem of a divergence enables us to find the reverse projection in terms of a forward projection if the former is done on the statistical model associated with the divergence. The statistical models associated with the $I$, $B_\alpha$, $\mathscr{I}_\alpha$ and $D_\alpha$ divergences are, respectively, exponential family, non-normalized $\alpha$-power-law family, $\alpha$-power-law family, and $\alpha$-exponential family which we shall define now. Indeed, for $I$, $B_\alpha$, $\mathscr{I}_\alpha$ and $D_\alpha$ divergences, the reverse projection (if exists) must satisfy an equation that we call a {\em projection equation}. We shall first list down the statistical models along with the projection equations associated with these divergences.

\begin{enumerate}
	\item {\em Projection equation for $I$-divergence on exponential family}:
	\vspace*{0.2cm}
	
	The {\em exponential family}, $\mathcal{E} := \mathcal{E}(Q, f, \Theta)$, characterized by a probability
	measure $Q\in\mathcal{P}$, $k$ real valued functions $f_i, i=1,\dots, k$ on $\mathcal{X}$, and parameter space $\Theta\subset\mathbb{R}^k$, is given by $\mathcal{E} = \{ P_\theta : \theta\in\Theta\}\subset\mathcal{P}$,
	where
	\begin{equation}
	\label{1eqn:exponential_family_formula}
	P_\theta(x) = Z(\theta)^{-1}\exp\big[\log(Q(x)) + \theta^Tf(x)\big]\quad \text{for } x\in\mathcal{X},
	\end{equation}
	$Z(\theta)$ is a normalizing constant that makes $P_\theta$ a probability measure, and $f = (f_1,\dots,f_k)^T$.
	
	The reverse $I$-projection of $\widehat{P}$ on $\mathcal{E}$, if exists, is a unique solution of
	\begin{eqnarray}
	\label{1eqn:projection_equation_of_KL_div}
	\mathbb{E}_\theta[f(X)] = \bar{f},
	\end{eqnarray}
	where $\bar{f} := (\bar{f_1},\dots, \bar{f_k})^T$, $\bar{f_i} := \frac{1}{n}\sum_{j=1}^{n}f_i(X_j)$ for $i = 1,\dots, k$, and $\mathbb{E}_\theta$ denotes expected value with respect to $P_\theta$. The above result was due to \cite[Th.~3.3]{CsiszarS04B}. We call (\ref{1eqn:projection_equation_of_KL_div}) the {\em projection equation} for $I$-divergence on exponential family.
	\vspace*{0.2cm}
	
	\item {\em Projection equation for $B_\alpha$ on non-normalized $\alpha$-power-law family}:
	\vspace*{0.2cm}
	
	The {\em non-normalized $\alpha$-power-law family}, $\mathbb{B}^{(\alpha)} := \mathbb{B}^{(\alpha)}(Q, f, \Theta)\subset\mathcal{P}$, characterized by a $Q\in\mathcal{P}$, $k$ real valued functions $f_i, i=1,\dots, k$ on $\mathcal{X}$, and parameter space $\Theta\subset\mathbb{R}^k$, is defined as in Definition \ref{1defn:non_normalized_alpha_powerlaw_family}.
	
	The reverse $B_\alpha$-projection of $\widehat{P}$ on $\mathbb{B}^{(\alpha)}$, if exists, is a unique solution of
	\begin{equation*}
	\mathbb{E}_\theta[f(X)] = \bar{f},
	\end{equation*}
	by Theorem \ref{1thm:orthogonality2}. (Notice that the projection equations for $I$ and $B_\alpha$-divergences are the same.)
	\vspace*{0.2cm}
	
	\item {\em Projection equation for $\mathscr{I}_\alpha$ on $\alpha$-power-law family}:
	\vspace*{0.2cm}
	
	The {\em $\alpha$-power-law family}, $\mathbb{M}^{(\alpha)} := \mathbb{M}^{(\alpha)}(Q, f, \Theta)$\footnote{In the continuous case, each Student-t distribution can be seen as point-wise limit of a sequence of distributions from $\mathbb{M}^{(\alpha)}$. (c.f. \cite[Rem.~13]{KumarS15J1}.)}, characterized by a $Q\in\mathcal{P}$ and $k$ real valued functions $f_i, i = 1,\dots, k$ on $\mathcal{X}$, and parameter space $\Theta\subset\mathbb{R}^k$, is given by	$\mathbb{M}^{(\alpha)} = \{ P_\theta : \theta\in\Theta\}\subset\mathcal{P}$,	where
	\begin{eqnarray}
	\label{1eqn:power_law_family_formula}
	P_\theta(x) = Z(\theta)^{-1} \big[Q(x)^{\alpha-1} + (1-\alpha)\theta^Tf(x)\big]^{\frac{1}{\alpha-1}}\quad\text{for }~x\in\mathcal{X},
	\end{eqnarray}
	and $Z(\theta)$ is the normalizing constant (c.f. \cite[Defn.~8]{KumarS15J2}).
	\vspace*{0.2cm}
	
	The reverse $\mathscr{I}_\alpha$-projection of $\widehat{P}$ on $\mathbb{M}^{(\alpha)}$, if exists, is a unique solution of
	\begin{eqnarray}
	\label{1eqn:projection_equation_of_I_alpha_div}
	\mathbb{E}_\theta\big[f_i(X)\big] = \dfrac{\mathbb{E}_\theta\big[Q(X)^{\alpha-1}\big]}{\overline{{Q}^{\alpha-1}}}\bar{f_i},
	\end{eqnarray}
	for $i = 1,\dots, k$, where $\overline{{Q}^{\alpha-1}} := \frac{1}{n}\sum_{j=1}^n Q(X_j)^{\alpha-1}$ \cite[Th.~18 and Th.~21]{KumarS15J2}\footnote{{\label{1ftnote:linear_independence}}Notice that this result in \cite{KumarS15J2} is true under the assumption that the associated linear family $\mathbb{L}$ determined by the functions $f_1,\dots,f_k$ is non-empty.}.
	\vspace*{0.2cm}
	
	\item {\em Projection equation for $D_\alpha$ on $\alpha$-exponential family}:
	\vspace*{0.2cm}
	
	The {\em $\alpha$-exponential family}, $\mathscr{E}_\alpha := \mathscr{E}_\alpha(Q, f, \Theta)$, characterized by a $Q\in\mathcal{P}$ and $k$ real valued functions $f_i, i = 1,\dots, k$ on $\mathcal{X}$, and parameter space $\Theta\subset\mathbb{R}^k$, is given by $\mathscr{E}_\alpha = \{ P_\theta : \theta\in\Theta\}\subset\mathcal{P}$,
	where
	\begin{eqnarray}
	\label{1eqn:E_alpha_family_formula}
	P_\theta(x) = Z(\theta)^{-1} \big[Q(x)^{1-\alpha} + (1-\alpha)\theta^Tf(x)\big]^{\frac{1}{1-\alpha}}\quad\text{for }~x\in\mathcal{X},
	\end{eqnarray}
	and $Z(\theta)$ is the normalizing constant (c.f. \cite[Eq.~(62)]{KumarS16J}.
	\vspace*{0.2cm}
	
	The reverse $D_\alpha$-projection of $\widehat{P}$ on $\mathscr{E}_{\alpha}$, if exists, is a unique solution of
	\begin{eqnarray}
	\label{1eqn:projection_equation_of_D_alpha_div}
	\sum\limits_{x\in\mathcal{X}}P_\theta(x)^\alpha f_i(x) = \dfrac{\sum\limits_{x\in\mathcal{X}}P_\theta(x)^\alpha Q(x)^{1-\alpha}}{\sum\limits_{x\in\mathcal{X}}\widehat{P}(x)^\alpha Q(x)^{1-\alpha}}\sum\limits_{x\in\mathcal{X}}\widehat{P}(x)^\alpha f_i(x),
	\end{eqnarray}
	for $i = 1,\dots, k$ (\cite[Th.~6]{KumarS16J}).
\end{enumerate}
(\ref{1eqn:projection_equation_of_D_alpha_div}) can be re-written as
\begin{eqnarray}
\label{1eqn:projection_equation_of_D_alpha_div1}
\mathbb{E}_{\theta^{(\alpha)}}\big[f_i(X)\big] = \dfrac{\mathbb{E}_{\theta^{(\alpha)}}\big[Q(X)^{1-\alpha}\big]}{\overline{{Q}^{1-\alpha}}^{(\alpha)}}\overline{f_i}^{(\alpha)},
\end{eqnarray}
where $\mathbb{E}_{\theta^{(\alpha)}}$ denotes expectation with respect to $P_{\theta}^{(\alpha)}$; $\overline{{Q}^{1-\alpha}}^{(\alpha)}$ and $\overline{f_i}^{(\alpha)}$ are, respectively, averages of $Q(\cdot)^{1-\alpha}$ and $f_i(\cdot)$, with respect to $\widehat{P}^{(\alpha)}$,  where $P^{(\alpha)}(x) := P(x)^{\alpha}/\sum_y P(y)^{\alpha}$ is the {\em $\alpha$-scaled measure}\footnote{Sometimes referred as {\em escort measure} among the physicists.} associated with $P$.

Notice that while the projection equations for $I$ and $B_{\alpha}$ depend on the sample only through $\bar{f}$, the projection equations for $\mathscr{I}_{\alpha}$ and $D_\alpha$, respectively, depend on $\bar{f}/{\overline{{Q}^{\alpha-1}}}$ and $\overline{f}^{(\alpha)}/{\overline{{Q}^{1-\alpha}}^{(\alpha)}}$. This fact will be further explored in terms of sufficiency principle in section \ref{1sec:Projection_theorems_sufficiency_principle}.  Also notice that (\ref{1eqn:projection_equation_of_I_alpha_div}) and (\ref{1eqn:projection_equation_of_D_alpha_div1}) are related by the transformations $P\leftrightarrow P^{(\alpha)}$ and $\alpha\leftrightarrow 1/\alpha$. The following lemmas explore this connection further. While lemma \ref{1lem:equivalence_of_E_alpha_and_I_alpha} establishes the connection between $\mathbb{M}^{(\alpha)}$ and $\mathscr{E}_\alpha$, lemma \ref{1lem:connection_of_estimating_equation_and_projection_equation} establishes the connection between the estimation problem based on $D_\alpha$-divergence on $\mathscr{E}_\alpha$ and the estimation problem based on $\mathscr{I}_\alpha$-divergence on $\mathbb{M}^{(\alpha)}$.
Lemma \ref{1lem:equivalence_of_E_alpha_and_I_alpha} is due to Karthik and Sundaresan \cite[Th.~2]{KarthikS17J}, where only the reverse implication was proved though. 
\begin{lemma}
	\label{1lem:equivalence_of_E_alpha_and_I_alpha}
	The map $P\mapsto P^{(\alpha)}$ establishes a one-to-one correspondence between the parametric families
	$\mathscr{E}_\alpha (Q, f, \Theta)$ and $\mathbb{M}^{(1/\alpha)}(Q^{(\alpha)}, f, \Theta')$,
	where 
	\begin{eqnarray}
	\label{1eqn:theta'}
	\Theta' = \Big\lbrace \theta'=(\theta_1',\dots, \theta_k'):\theta'_i = \tfrac{(- \alpha)\theta_i}{\| Q \| ^{1 - \alpha}}, i=1,\dots,k,~\theta\in \Theta\Big\rbrace,
	\end{eqnarray}
	and $\|Q\| := \big[\sum_x Q(x)^\alpha\big]^{1/\alpha}$. (We suppress the dependence of $\|Q\|$ on $\alpha$ for notational convenience.)
\end{lemma}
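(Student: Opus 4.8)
The plan is to verify the correspondence by a single direct computation, exploiting the fact that raising a member of $\mathscr{E}_\alpha$ to the power $\alpha$ and renormalizing produces exactly the functional form of a member of $\mathbb{M}^{(1/\alpha)}$. First I would fix $\theta\in\Theta$ and write the generic element of $\mathscr{E}_\alpha(Q,f,\Theta)$ as $P_\theta(x) = Z(\theta)^{-1}[Q(x)^{1-\alpha} + (1-\alpha)\theta^T f(x)]^{1/(1-\alpha)}$. Raising to the power $\alpha$ gives $P_\theta(x)^\alpha = Z(\theta)^{-\alpha}[Q(x)^{1-\alpha} + (1-\alpha)\theta^T f(x)]^{\alpha/(1-\alpha)}$, and since $P_\theta^{(\alpha)}(x) = P_\theta(x)^\alpha / \sum_y P_\theta(y)^\alpha$, the constant $Z(\theta)^{-\alpha}$ cancels in the ratio, leaving $P_\theta^{(\alpha)}(x)$ proportional to $[Q(x)^{1-\alpha} + (1-\alpha)\theta^T f(x)]^{\alpha/(1-\alpha)}$.

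The next step is to recognize this expression as a member of $\mathbb{M}^{(1/\alpha)}(Q^{(\alpha)},f,\Theta')$. Substituting $\beta = 1/\alpha$ into the defining formula (\ref{1eqn:power_law_family_formula}) yields the outer exponent $1/(\beta-1) = \alpha/(1-\alpha)$, which already matches the exponent obtained above, and the base term $(Q^{(\alpha)}(x))^{\beta-1} = (Q^{(\alpha)}(x))^{(1-\alpha)/\alpha}$. Using $Q^{(\alpha)}(x) = Q(x)^\alpha/\|Q\|^\alpha$, this base term simplifies to $Q(x)^{1-\alpha}/\|Q\|^{1-\alpha}$. I would then factor the constant $\|Q\|^{-(1-\alpha)}$ out of the bracket, since it only rescales the normalizing constant $Z(\theta')$ and may therefore be absorbed, and compare the two linear terms. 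Matching $(1-\alpha)\theta^T f(x)$ against $-\tfrac{1-\alpha}{\alpha}\|Q\|^{1-\alpha}{\theta'}^T f(x)$ forces $\theta'_i = (-\alpha)\theta_i/\|Q\|^{1-\alpha}$ for each $i$, which is precisely the defining relation of $\Theta'$ in (\ref{1eqn:theta'}).

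It remains to argue that the resulting map is a genuine bijection of families. The parameter relabeling $\theta\mapsto\theta'$ is a linear bijection from $\Theta$ onto $\Theta'$, invertible because $\alpha\neq 0$ and $\|Q\|>0$; and the $\alpha$-scaling map $P\mapsto P^{(\alpha)}$ is itself invertible on $\mathcal{P}$, with inverse $R\mapsto R^{(1/\alpha)}$, since $R^{(1/\alpha)}(x)$ is proportional to $R(x)^{1/\alpha}$ and thus recovers the original $P$. Composing these two bijections gives the claimed one-to-one correspondence, and the forward computation shows that it carries $\mathscr{E}_\alpha(Q,f,\Theta)$ onto $\mathbb{M}^{(1/\alpha)}(Q^{(\alpha)},f,\Theta')$.

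I expect the only delicate point to be the bookkeeping of the normalizing factor $\|Q\|^{1-\alpha}$ and the sign that appear when rewriting $Q(x)^{1-\alpha}$ through $Q^{(\alpha)}$ and when passing from the exponent $1-\alpha$ to $(1/\alpha)-1$; everything else is forced once the two outer exponents are seen to coincide.
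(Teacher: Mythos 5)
Your proposal is correct, and its core is the same as the paper's: the identical forward computation, raising $P_\theta\in\mathscr{E}_\alpha$ to the power $\alpha$, rewriting $Q(x)^{1-\alpha}$ as $\|Q\|^{1-\alpha}\{Q^{(\alpha)}(x)\}^{(1/\alpha)-1}$, and matching the linear term to read off $\theta'_i=(-\alpha)\theta_i/\|Q\|^{1-\alpha}$; your sign and exponent bookkeeping checks out against the paper's display. The one structural difference is surjectivity: the paper proves it by a second explicit computation (take an arbitrary $P\in\mathbb{M}^{(1/\alpha)}(Q^{(\alpha)},f,\Theta')$, compute $P^{(1/\alpha)}$, and verify it lies in $\mathscr{E}_\alpha(Q,f,\Theta)$), whereas you deduce it from the fact that $\theta\mapsto\theta'$ is onto $\Theta'$ by construction, so every member $P_{\theta'}$ of the target family is hit by the forward computation applied to the corresponding $\theta$. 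Your route is leaner and avoids redoing the algebra in reverse; it is valid provided you make explicit the small step you leave implicit, namely that since $P_\theta^{(\alpha)}$ and $P_{\theta'}$ are both probability measures proportional to the same function of $x$, the proportionality constant is forced to be $Z(\theta')^{-1}$ and the two measures are equal. Your injectivity argument, via the identity $(P^{(\alpha)})^{(1/\alpha)}=P$ on $\mathcal{P}$, is the same reason the paper's terse ``$P_\theta^{(\alpha)}=P_\eta^{(\alpha)}$ implies $P_\theta=P_\eta$'' holds, just spelled out.
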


\begin{proof}
	For any $P_\theta\in \mathscr{E}_\alpha (Q, f, \Theta)$, we have from (\ref{1eqn:E_alpha_family_formula}), for $x\in \mathcal{X}$,
	\begin{align}
	\label{1eqn:E_alpha_to_M_alpha}
	P^{(\alpha)}_\theta(x) 
	& = \tfrac{Z(\theta)^{ - \alpha}}{\|P_\theta\|^\alpha} \Big[Q(x)^{1 - \alpha} + (1 - \alpha) \theta^Tf(x)\Big]^{\frac{\alpha}{1 - \alpha}}\nonumber\\
	& = \tfrac{Z(\theta)^{ - \alpha}\|Q\|^{\alpha}}{\|P_\theta\|^\alpha} \Big[ \Big\lbrace {\tfrac{Q(x)^\alpha}{\|Q\|^\alpha}}\Big\rbrace^{\frac{1 - \alpha}{\alpha}} + 
	(1-\tfrac{1}{\alpha}) \frac{(- \alpha)}{\| Q \| ^{1-\alpha}}
	\theta^T f(x)\Big]^{\frac{1}{\frac{1}{\alpha}-1}}\nonumber\\
	& =   \tfrac{Z(\theta)^{ - \alpha}\|Q\|^{\alpha}}{\|P_\theta\|^\alpha} \Big[ \big\lbrace Q^{(\alpha)}(x)\big\rbrace^{\frac{1}{\alpha}-1}
	+ (1-\tfrac{1}{\alpha}) {\theta'}^T f(x)\Big]^{\frac{1}{\frac{1}{\alpha}-1}}.
	\end{align}
	Hence $P_\theta^{(\alpha)}\in \mathbb{M}^{(1/\alpha)}(Q^{(\alpha)}, f, \Theta')$. So, the mapping is well-defined. The map is clearly one-one, since it is easy to see that, if $P_{\theta}^{(\alpha)} = P_{\eta}^{(\alpha)}$ for some $\theta, \eta\in\Theta$, then $P_{\theta} = P_{\eta}$.
	To verify it is onto, let $P\in \mathbb{M}^{(1/\alpha)}(Q^{(\alpha)}, f, \Theta')$
	be arbitrary. Then
	\begin{align*}
	P(x) 
	& = Z(\theta')^{-1}\Big[\big\lbrace Q^{(\alpha)}(x)\big\rbrace^{\frac{1}{\alpha} - 1} + 
	(1 - \tfrac{1}{\alpha}) {\theta'}^Tf(x)\Big]^{\frac{1}{\frac{1}{\alpha}-1}}\\
	& =  Z(\theta')^{-1}\Big[\Big\lbrace \tfrac{Q(x)}{\|Q\|}\Big\rbrace^{1-\alpha} +
	\tfrac{(1-\alpha)}{\| Q \| ^{1-\alpha}} \theta^Tf(x)\Big]^{\frac{\alpha}{1-\alpha}}\\
	& = \tfrac{Z(\theta')^{-1}}{\|Q\|^\alpha} \big[Q(x)^{1-\alpha} + (1-\alpha) \theta^Tf(x)\big]^{\frac{\alpha}{1-\alpha}}.
	\end{align*}
	This implies that
	\begin{equation*}
	P(x)^{1/\alpha} = \tfrac{Z(\theta')^{{-1/\alpha}}}{\|Q\|} \big[Q(x)^{1-\alpha} + (1-\alpha)\theta^Tf(x)\big]^{\frac{1}{1-\alpha}},
	\end{equation*}
	and hence
	\begin{align*}
	P^{(1/\alpha)}(x)
	& = \tfrac{Z(\theta')^{-1/\alpha}}{\|Q\| \sum_y P(y)^{1/\alpha}} \big[Q(x)^{1-\alpha} + (1-\alpha)\theta^Tf(x)\big]^{\frac{1}{1-\alpha}}.
	\end{align*}
	Hence $P^{(1/\alpha)}\in \mathscr{E}_\alpha (Q, f, \Theta)$ and so $P^{(1/\alpha)} = P_\theta$ for some $\theta\in\Theta$. It is now easy to show that $P_\theta^{(\alpha)} = P$. Thus, for any $P\in \mathbb{M}^{(\tiny{1/\alpha})}(Q^{(\alpha)}, f, \Theta')$, there exists $P_\theta\in \mathscr{E}_\alpha (Q, f, \Theta)$
	such that $P^{(\alpha)}_\theta = P$. Hence the mapping is onto.
	
\end{proof}

\begin{lemma}
	\label{1lem:connection_of_estimating_equation_and_projection_equation}
	The following statements hold.
	\begin{enumerate}
		\item[(a)] Solving the estimating equation (\ref{1eqn:score_equation_for_D_alpha})
		for $P_\theta\in \mathscr{E}_\alpha(Q, f, \Theta)$
		is equivalent to solving the estimating equation (\ref{1eqn:score_equation_I_alpha}) for the $\alpha$-scaled measure $P_\theta^{(\alpha)} \in \mathbb{M}^{(\tiny{1/\alpha})}(Q^{(\alpha)}, f, \Theta')$, where $\Theta'$ is as in (\ref{1eqn:theta'}).
		
		\item[(b)] Solving the projection equation (\ref{1eqn:projection_equation_of_D_alpha_div1}) for 
		$P_\theta\in \mathscr{E}_\alpha(Q, f, \Theta)$ is equivalent to solving the projection equation 
		(\ref{1eqn:projection_equation_of_I_alpha_div}) for the $\alpha$-scaled measure	$P_\theta^{(\alpha)}\in \mathbb{M}^{({1/\alpha})}(Q^{(\alpha)},	f, \Theta')$,
		 where $\Theta'$ is as in (\ref{1eqn:theta'}).
	\end{enumerate}
\end{lemma}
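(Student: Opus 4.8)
The plan is to reduce both parts to elementary algebraic identities by first invoking the bijection of Lemma~\ref{1lem:equivalence_of_E_alpha_and_I_alpha}, which identifies each $P_\theta\in\mathscr E_\alpha(Q,f,\Theta)$ with $P_\theta^{(\alpha)}\in\mathbb M^{(1/\alpha)}(Q^{(\alpha)},f,\Theta')$, the parameters being tied by the affine relation $\theta'_i=-\alpha\theta_i/\|Q\|^{1-\alpha}$. Throughout I will use the scaling identities $P_\theta^{(\alpha)}=P_\theta^\alpha/\|P_\theta\|^\alpha$, $(P_\theta^{(\alpha)})^{1/\alpha}=P_\theta/\|P_\theta\|$, $(P_\theta^{(\alpha)})^{1/\alpha-1}=P_\theta^{1-\alpha}/\|P_\theta\|^{1-\alpha}$ and, crucially, $(Q^{(\alpha)})^{1/\alpha-1}=Q^{1-\alpha}/\|Q\|^{1-\alpha}$. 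The whole point of the correspondence is that the various powers of $\|Q\|$, $\|P_\theta\|$ and $\|\widehat P\|$ that these identities introduce cancel in the ratios appearing in (\ref{1eqn:score_equation_I_alpha}) and (\ref{1eqn:projection_equation_of_I_alpha_div}). I will read the transformation $P\leftrightarrow P^{(\alpha)}$ as acting on the data as well, so that the empirical measure driving the $\mathbb M^{(1/\alpha)}$-problem is the $\alpha$-scaled empirical $\widehat P^{(\alpha)}$; this is consistent with (\ref{1eqn:projection_equation_of_D_alpha_div1}) being written through averages with respect to $\widehat P^{(\alpha)}$, and it is precisely what will match the powers of the data measure on the two sides.

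For part (a) the main device is the relation between the two score functions. Writing $s(x;\theta)=\nabla_\theta\log P_\theta(x)$ and differentiating $\log P_\theta^{(\alpha)}(x)=\alpha\log P_\theta(x)-\log\sum_y P_\theta(y)^\alpha$ gives $\nabla_\theta\log P_\theta^{(\alpha)}(x)=\alpha\big[s(x;\theta)-\mathbb E_{P_\theta^{(\alpha)}}[s]\big]$; composing with the constant Jacobian of $\theta\mapsto\theta'$ shows that the score of the $\mathbb M^{(1/\alpha)}$-model is a fixed nonzero multiple of the recentred function $s(x;\theta)-\mathbb E_{P_\theta^{(\alpha)}}[s]$. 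I then substitute this, together with the scaling identities, into (\ref{1eqn:score_equation_I_alpha}). The right-hand normalized average of the score has weights $(P_\theta^{(\alpha)})^{1/\alpha}\propto P_\theta$, hence reduces via $\mathbb E_{P_\theta}[s]=0$ (the ordinary score has zero mean under its own model) to a constant multiple of $-\mathbb E_{P_\theta^{(\alpha)}}[s]$; the left-hand normalized average, with weights proportional to $\widehat P^\alpha P_\theta^{1-\alpha}$, equals the same constant times the $(\widehat P^\alpha P_\theta^{1-\alpha})$-weighted average of $s$ minus the same recentring term. Equating the two sides cancels the constant and the recentring term and forces $\sum_x\widehat P(x)^\alpha P_\theta(x)^{1-\alpha}s(x;\theta)=0$, which is (\ref{1eqn:score_equation_for_D_alpha}). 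Every step is reversible, so the solution sets correspond under $\theta\leftrightarrow\theta'$.

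For part (b) I will avoid scores and substitute directly. Writing (\ref{1eqn:projection_equation_of_I_alpha_div}) for $\mathbb M^{(1/\alpha)}(Q^{(\alpha)},f,\Theta')$ — that is, with order $1/\alpha$, reference $Q^{(\alpha)}$, model member $P_\theta^{(\alpha)}$ and data $\widehat P^{(\alpha)}$ — its left side is $\mathbb E_{P_\theta^{(\alpha)}}[f_i]$ and its right side is $\big(\mathbb E_{P_\theta^{(\alpha)}}[(Q^{(\alpha)})^{1/\alpha-1}]/\overline{(Q^{(\alpha)})^{1/\alpha-1}}\big)\bar f_i$, all averages now taken with respect to $\widehat P^{(\alpha)}$. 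The scaling identities rewrite each factor as a sum of the form $\sum_x P_\theta^\alpha Q^{1-\alpha}$, $\sum_x P_\theta^\alpha f_i$, $\sum_x\widehat P^\alpha Q^{1-\alpha}$ or $\sum_x\widehat P^\alpha f_i$, each carrying a power of $\|P_\theta\|$, $\|Q\|$ or $\|\widehat P\|$; these constants cancel between numerator, denominator and the two sides. What survives is exactly $\sum_x P_\theta(x)^\alpha f_i(x)=\big(\sum_x P_\theta^\alpha Q^{1-\alpha}/\sum_x\widehat P^\alpha Q^{1-\alpha}\big)\sum_x\widehat P^\alpha f_i(x)$, which is (\ref{1eqn:projection_equation_of_D_alpha_div}), equivalently its rescaled form (\ref{1eqn:projection_equation_of_D_alpha_div1}). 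Since each manipulation is a reversible identity in the scaling constants, solving one projection equation for $P_\theta^{(\alpha)}$ is equivalent to solving the other for $P_\theta$. (Alternatively, part (b) follows from part (a) once one checks that (\ref{1eqn:projection_equation_of_I_alpha_div}) and (\ref{1eqn:projection_equation_of_D_alpha_div}) are the specializations of (\ref{1eqn:score_equation_I_alpha}) and (\ref{1eqn:score_equation_for_D_alpha}) to the respective families, but the direct substitution is self-contained.)

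The step I expect to be the main obstacle is the score computation in part (a): one must correctly extract the recentring term $\mathbb E_{P_\theta^{(\alpha)}}[s]$ from differentiating the log-normalizer $\log\sum_y P_\theta(y)^\alpha$, and then recognize that the right-hand side of the balance (\ref{1eqn:score_equation_I_alpha}) collapses to that same term by way of $\mathbb E_{P_\theta}[s]=0$, so that the centring cancels when the two sides are equated. The algebra in part (b) is routine once the scaling identities are in hand; the only point needing care there is that the empirical measure of the transformed problem is $\widehat P^{(\alpha)}$ and not $\widehat P$, which is exactly what aligns the powers of the data measure in (\ref{1eqn:projection_equation_of_I_alpha_div}) and (\ref{1eqn:projection_equation_of_D_alpha_div}).
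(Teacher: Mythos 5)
Your proposal is correct and follows essentially the same route as the paper: both rest on the bijection of Lemma \ref{1lem:equivalence_of_E_alpha_and_I_alpha}, the score identity $\nabla_\theta\log P_\theta^{(\alpha)}(x)=\alpha\big[s(x;\theta)-\nabla_\theta\log\|P_\theta\|\big]$ (your recentring term $\mathbb{E}_{P_\theta^{(\alpha)}}[s]$ is exactly the paper's $A(\theta)=\nabla\log\|P_\theta\|$), the fact that $\sum_x P_\theta(x)s(x;\theta)=0$, and the scaling identity $(Q^{(\alpha)})^{1/\alpha-1}\propto Q^{1-\alpha}$ for part (b). You merely run the chain of equivalences in the opposite direction (from the transformed $\mathscr{I}_{1/\alpha}$ equations back to the $D_\alpha$ ones), and your explicit handling of the constant Jacobian of $\theta\mapsto\theta'$ is a small point the paper leaves implicit.
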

\begin{proof}
	\begin{itemize}
		\item[(a)] The estimating equation in (\ref{1eqn:score_equation_for_D_alpha}) can be re-written as
		\begin{eqnarray}
		\label{1eqn:modified_estimating_equation_for_D_alpha}
		\dfrac{\sum\limits_{x\in\mathcal{X}}\widehat{P}(x)^\alpha P_\theta(x)^{1-\alpha}s(x;\theta)}{\sum\limits_{x\in\mathcal{X}}\widehat{P}(x)^\alpha P_\theta(x)^{1-\alpha}} =
		{\sum\limits_{x\in\mathcal{X}}P_\theta(x)s(x;\theta)},
		\end{eqnarray}
		since $\sum\limits_{x\in\mathcal{X}}P_\theta(x)s(x;\theta) = 0$. This can further be re-written as
		\begin{eqnarray}
		\label{1eqn:estimating _equation_of_D_alpha_after_transformation}
		\dfrac{\sum\limits_{x\in\mathcal{X}}\widehat{P}^{(\alpha)}(x) \big[ P^{(\alpha)}_\theta(x)\big]
			^{\tfrac{1}{\alpha}-1}s(x;\theta)}{\sum\limits_{x\in\mathcal{X}}\widehat{P}^{(\alpha)} (x)
			\big[P_\theta^{(\alpha)}(x)\big] ^{\tfrac{1}{\alpha}-1}}  =
		\dfrac{\sum\limits_{x\in\mathcal{X}}\big[ P^{(\alpha)}_\theta(x)\big]
			^{\tfrac{1}{\alpha}} s(x;\theta)}
		{\sum\limits_{x\in\mathcal{X}}\big[ P^{(\alpha)}_\theta(x)\big]
			^{\tfrac{1}{\alpha}}}. 
		\end{eqnarray}
		Observe that
		\begin{align*}
		s^{(\alpha)}(x;\theta) := \nabla\log P^{(\alpha)}_\theta(x) 
		& = \nabla\log \tfrac{P_\theta(x)^\alpha}{\|P_\theta\|^\alpha}\\
		& = \nabla \big[ \log P_\theta (x)^\alpha -\log \|P_\theta\|^\alpha\big]\\
		& = \alpha \big[s(x;\theta) - \nabla\log \|P_\theta\| \big].
		\end{align*}
		Hence
		\begin{eqnarray}
		\label{1eqn:s_in_terms_of_s_alpha}
		s(x;\theta) = \tfrac{1}{\alpha} s^{(\alpha)}(x;\theta) + A(\theta),
		\end{eqnarray}
		where $A(\theta) = \nabla\log \|P_\theta\|$. Plugging (\ref{1eqn:s_in_terms_of_s_alpha}) in (\ref{1eqn:estimating _equation_of_D_alpha_after_transformation}), we get
		\begin{eqnarray}
		\label{1eqn:estimating _equation_of_D_alpha_after_transformation_final}
		\dfrac{\sum\limits_{x\in\mathcal{X}}\widehat{P}^{(\alpha)}(x) \big[ P^{(\alpha)}_\theta(x)\big]
			^{\tfrac{1}{\alpha}-1}s^{(\alpha)}(x;\theta)}{\sum\limits_{x\in\mathcal{X}}\widehat{P}^{(\alpha)} (x)
			\big[P_\theta^{(\alpha)}(x)\big] ^{\tfrac{1}{\alpha}-1}}  =
		\dfrac{\sum\limits_{x\in\mathcal{X}}\big[ P^{(\alpha)}_\theta(x)\big]
			^{\tfrac{1}{\alpha}} s^{(\alpha)}(x;\theta)}
		{\sum\limits_{x\in\mathcal{X}}\big[ P^{(\alpha)}_\theta(x)\big]
			^{\tfrac{1}{\alpha}}}. 
		\end{eqnarray}
		This is same as (\ref{1eqn:score_equation_I_alpha}) with $\widehat{P}$, $P_\theta$, and $\alpha$, respectively, replaced by $\widehat{P}^{(\alpha)}$, $P_\theta^{(\alpha)}$, and $1/\alpha$. Thus, solving the estimating equation (\ref{1eqn:score_equation_for_D_alpha}) for $P_\theta\in \mathscr{E}_\alpha(Q, f, \Theta)$ is equivalent to solving (\ref{1eqn:score_equation_I_alpha}) for  $P_\theta^{(\alpha)} \in \mathbb{M}^{({1/\alpha})}(Q^{(\alpha)}, f, \Theta')$, by lemma \ref{1lem:equivalence_of_E_alpha_and_I_alpha}.
		\vspace*{0.2cm}
		
		\item[(b)] The projection equation (\ref{1eqn:projection_equation_of_D_alpha_div1}) can further be re-written as
		\begin{eqnarray}
		\label{1eqn:projection_equation_of_D_alpha_transformed}
		\mathbb{E}_{\theta^{(\alpha)}}\big[f_i (X)\big] = 
		\dfrac{\mathbb{E}_{\theta^{(\alpha)}}
			\Big[\big\lbrace Q^{(\alpha)}(X)\big\rbrace^{\tfrac{1}{\alpha}-1}\Big]}
		{\sum\limits_{x\in\mathcal{X}}\widehat{P}^{(\alpha)}(x) \big[Q^{(\alpha)}(x)\big]^{\tfrac{1}{\alpha}-1}}
		\sum\limits_{x\in\mathcal{X}}\widehat{P}^{(\alpha)}(x) f_i(x).
		\end{eqnarray}
		This is same as (\ref{1eqn:projection_equation_of_I_alpha_div}) with $P_\theta$, $\widehat{P}$, $Q$, and $\alpha$, respectively, replaced by $P_\theta^{(\alpha)}$, $\widehat{P}^{(\alpha)}$, $Q^{(\alpha)}$, and $1/\alpha$. Thus solving (\ref{1eqn:projection_equation_of_D_alpha_div}) for 
		$P_\theta\in \mathscr{E}_\alpha(Q, f, \Theta)$
		is equivalent to solving (\ref{1eqn:projection_equation_of_D_alpha_transformed}) for
		$P_\theta^{(\alpha)}\in \mathbb{M}^{(1/\alpha)}(Q^{(\alpha)}, 
		f, \Theta')$, by lemma \ref{1lem:equivalence_of_E_alpha_and_I_alpha}.
	\end{itemize}  
\end{proof}

We are now ready to state the main result of this section.

\begin{theorem}
	The solutions of the estimation problems based on the estimating equations (\ref{1eqn:score_equation_mle}), (\ref{1eqn:score_equation_for_D_alpha}), (\ref{1eqn:score_equation_B_alpha}),  and (\ref{1eqn:score_equation_I_alpha}) on  $\mathcal{E}$, $\mathscr{E}_\alpha$, $\mathbb{B}^{(\alpha)}$, and $\mathbb{M}^{(\alpha)}$ respectively, is same as the solutions obtained from their corresponding projection equations. 
\end{theorem}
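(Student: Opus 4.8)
The plan is to verify, family by family, that a member $P_\theta$ satisfies its estimating equation if and only if it satisfies the corresponding projection equation; since these pointwise equivalences identify the two solution sets, the theorem follows. The first move is to cut the number of cases from four to three. By Lemma~\ref{1lem:equivalence_of_E_alpha_and_I_alpha} the map $P\mapsto P^{(\alpha)}$ is a bijection between $\mathscr{E}_\alpha(Q,f,\Theta)$ and $\mathbb{M}^{(1/\alpha)}(Q^{(\alpha)},f,\Theta')$, and Lemma~\ref{1lem:connection_of_estimating_equation_and_projection_equation} transports, under this bijection, the estimating equation \eqref{1eqn:score_equation_for_D_alpha} to \eqref{1eqn:score_equation_I_alpha} and the projection equation \eqref{1eqn:projection_equation_of_D_alpha_div1} to \eqref{1eqn:projection_equation_of_I_alpha_div}. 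Hence the $D_\alpha$-case on $\mathscr{E}_\alpha$ is equivalent to the $\mathscr{I}_\alpha$-case on $\mathbb{M}^{(1/\alpha)}$, and it suffices to settle the three cases $I$ on $\mathcal{E}$, $B_\alpha$ on $\mathbb{B}^{(\alpha)}$, and $\mathscr{I}_\alpha$ on $\mathbb{M}^{(\alpha)}$.

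In each case I would compute the score $s(x;\theta)=\nabla\log P_\theta(x)$ directly from the defining formula of the family and substitute it into the estimating equation. For the exponential family \eqref{1eqn:exponential_family_formula}, differentiating gives $s(x;\theta)=f(x)-\mathbb{E}_\theta[f(X)]$ (since $\nabla\log Z(\theta)=\mathbb{E}_\theta[f(X)]$), so \eqref{1eqn:score_equation_mle} reads $\sum_x\widehat{P}(x)\big(f(x)-\mathbb{E}_\theta[f(X)]\big)=0$, which collapses to $\mathbb{E}_\theta[f(X)]=\bar f$ because $\widehat P$ is a probability measure; this is exactly \eqref{1eqn:projection_equation_of_KL_div}. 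For $\mathbb{B}^{(\alpha)}$ the defining relation (Definition~\ref{1defn:non_normalized_alpha_powerlaw_family}) expresses $P_\theta(x)^{\alpha-1}$ as an affine function of the natural parameters plus an additive normalizing term $c(\theta)$, so $s(x;\theta)$ is a nonzero scalar multiple of $P_\theta(x)^{1-\alpha}\big(f(x)+\nabla c(\theta)\big)$. The crucial cancellation is that the weight $\widehat P(x)P_\theta(x)^{\alpha-1}-P_\theta(x)^{\alpha}$ in \eqref{1eqn:score_equation_B_alpha} multiplies $P_\theta(x)^{1-\alpha}$ to give the clean difference $\widehat P(x)-P_\theta(x)$; summing, the $\nabla c(\theta)$ contribution vanishes because both $\widehat P$ and $P_\theta$ are probability measures, leaving $\bar f-\mathbb{E}_\theta[f(X)]=0$, i.e. the $B_\alpha$-projection equation. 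These two cases are routine precisely because the $\theta$-only normalization term is annihilated by $\widehat P-P_\theta$.

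The $\mathscr{I}_\alpha$-case is the substantive one and is where I expect the only real work. Here $s(x;\theta)=-\nabla\log Z(\theta)-Z(\theta)^{1-\alpha}P_\theta(x)^{1-\alpha}f(x)$. Both sides of \eqref{1eqn:score_equation_I_alpha} are \emph{normalized} weighted averages of $s$, so the $x$-independent term $-\nabla\log Z(\theta)$ contributes equally to each side and cancels, and the common scalar $-Z(\theta)^{1-\alpha}$ cancels as well; what remains simplifies, using $P_\theta(x)^{\alpha-1}P_\theta(x)^{1-\alpha}=1$, to $\mathbb{E}_\theta[f(X)]=\lambda(\theta)\,\bar f$ with $\lambda(\theta)=\big(\sum_xP_\theta(x)^{\alpha}\big)\big/\big(\sum_x\widehat P(x)P_\theta(x)^{\alpha-1}\big)$. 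The obstacle is that this multiplier $\lambda(\theta)$ is not literally the multiplier $\mu(\theta)=\mathbb{E}_\theta[Q(X)^{\alpha-1}]/\overline{Q^{\alpha-1}}$ appearing in the projection equation \eqref{1eqn:projection_equation_of_I_alpha_div}, and one must show they agree on solutions. To do this I would substitute the affine identity $P_\theta(x)^{\alpha-1}=Z(\theta)^{1-\alpha}\big[Q(x)^{\alpha-1}+(1-\alpha)\theta^Tf(x)\big]$ into the numerator and denominator of $\lambda(\theta)$, obtaining
\[
\lambda(\theta)=\frac{\mathbb{E}_\theta[Q(X)^{\alpha-1}]+(1-\alpha)\theta^T\mathbb{E}_\theta[f(X)]}{\overline{Q^{\alpha-1}}+(1-\alpha)\theta^T\bar f}.
\]
Feeding the relation $\mathbb{E}_\theta[f(X)]=\lambda(\theta)\bar f$ into this expression and clearing the denominator makes the $\theta^T\bar f$ terms cancel and forces $\lambda(\theta)\,\overline{Q^{\alpha-1}}=\mathbb{E}_\theta[Q(X)^{\alpha-1}]$, i.e. $\lambda(\theta)=\mu(\theta)$; the same computation run in reverse shows the projection equation implies the estimating equation. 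This establishes the $\mathscr{I}_\alpha$-case, and combined with the reduction via Lemmas~\ref{1lem:equivalence_of_E_alpha_and_I_alpha} and \ref{1lem:connection_of_estimating_equation_and_projection_equation} it also delivers the $D_\alpha$-case, completing the proof.
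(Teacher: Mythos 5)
Your proof is correct, and while its overall architecture (three direct cases plus the reduction of the $D_\alpha$ case through Lemmas \ref{1lem:equivalence_of_E_alpha_and_I_alpha} and \ref{1lem:connection_of_estimating_equation_and_projection_equation}) matches the paper's, your treatment of the substantive case --- $\mathscr{I}_\alpha$ on $\mathbb{M}^{(\alpha)}$ --- is genuinely different and in one respect sharper. The paper proves only the one-sided implication: it shows that the projection equation (\ref{1eqn:projection_equation_of_I_alpha_div}) implies the reduced estimating equation (\ref{1eqn:equation_for_estimator_from_estimating_equation}), and then closes the gap by proving that the estimating equation has \emph{at most one} solution, via injectivity of the map $\Phi$ of (\ref{1eqn:expression_of_phi_i_1}), which requires the negative-definite Jacobian (covariance) argument of Appendix \ref{1sec:phi_is_oneone} together with the linear independence of $f_1(\cdot),\dots,f_k(\cdot)$ and $Q(\cdot)^{\alpha-1}$. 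You instead prove a pointwise two-way equivalence: writing the estimating equation as $\mathbb{E}_\theta[f_i(X)]=\lambda(\theta)\bar f_i$ for all $i$ and contracting with $\theta$, the resulting identity $\theta^T\mathbb{E}_\theta[f(X)]=\lambda(\theta)\,\theta^T\bar f$, substituted back into $\lambda(\theta)=\big(\mathbb{E}_\theta[Q(X)^{\alpha-1}]+(1-\alpha)\theta^T\mathbb{E}_\theta[f(X)]\big)\big/\big(\overline{Q^{\alpha-1}}+(1-\alpha)\theta^T\bar f\big)$, forces $\lambda(\theta)\,\overline{Q^{\alpha-1}}=\mathbb{E}_\theta[Q(X)^{\alpha-1}]$, so the multiplier in the estimating equation coincides with the multiplier in (\ref{1eqn:projection_equation_of_I_alpha_div}); all denominators involved are strictly positive, being positive multiples of $\sum_x\widehat P(x)P_\theta(x)^{\alpha-1}$ and $\sum_x P_\theta(x)^\alpha$, so the manipulation is legitimate. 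This buys you two things: the argument is elementary algebra that dispenses with Appendix \ref{1sec:phi_is_oneone} and its independence hypotheses, and it excludes the residual possibility left open by the paper's uniqueness route, namely that the estimating equation might admit a solution while the projection equation admits none. What your route does not deliver --- and the paper's does --- is uniqueness of the solution itself (injectivity of $\Phi$), which has independent value because the projection theorems assert that the reverse projection is the \emph{unique} solution of the projection equation. Your cases for $\mathcal{E}$, $\mathbb{B}^{(\alpha)}$, and $\mathscr{E}_\alpha$ coincide with the paper's up to notation.
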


\begin{proof}
	
	\begin{enumerate}
		\item[(a)] {\em MLE on exponential family $\mathcal{E}$}:
		\vspace{0.2cm}
		
		For $P_\theta\in\mathcal{E}$,
		\begin{equation}
		\label{1eqn:ln_exponential_probability}
		\log P_\theta(x) = -\log Z(\theta) + \log Q(x) + \theta^Tf(x).
		\end{equation}
		Hence (\ref{1eqn:score_equation_mle_in_terms_of_sample}) implies that
		\begin{equation}
		\label{1eqn:implication_of_score_equation}
		\nabla\log Z(\theta) = \frac{1}{n}\sum\limits_{j=1}^{n}f(X_j).
		\end{equation}
		Since $\mathbb{E}_\theta[\nabla\log P_\theta(X)] = 0$, (\ref{1eqn:ln_exponential_probability}) implies that
		\begin{equation}
		\label{1eqn:implication_of_expected_score_being_0}
		\nabla\log Z(\theta) = \mathbb{E}_\theta[f(X)].
		\end{equation}
		Thus, from (\ref{1eqn:implication_of_score_equation}) and (\ref{1eqn:implication_of_expected_score_being_0}), we see that the MLE must satisfy
		\begin{equation*}
		\mathbb{E}_\theta[f(X)] = \bar{f},
		\end{equation*}
		which is same as the projection equation for $I$-divergence.
		\vspace*{0.2cm}
		
		\item[(b)] {\em Robust estimation based on (\ref{1eqn:score_equation_B_alpha}) on $\mathbb{B}^{(\alpha)}$}:
		\vspace*{0.2cm}
		
		The estimating equation (\ref{1eqn:score_equation_B_alpha}) can be re-written as
		\begin{eqnarray}
		\label{1eqn:score_equation_B_alpha_calculated}
		\sum\limits_{x\in\mathcal{X}}\widehat{P}(x)P_\theta(x)^{\alpha-2}\nabla P_\theta(x) = \sum\limits_{x\in\mathcal{X}}P_\theta(x)^{\alpha-1}\nabla P_\theta(x).
		\end{eqnarray}
		For $P_\theta\in \mathbb{B}^{(\alpha)}$, from (\ref{1eqn:B_alpha_family_formula}) we have
		\begin{eqnarray*}
		\nabla P_\theta(x) & = & -\big[Q(x)^{\alpha-1} + (1-\alpha) \big\{Z(\theta) + \theta^Tf(x)\big\}\big]^{\frac{1}{\alpha-1} - 1}[\nabla Z(\theta) + f(x)]\\
		& = & -P_\theta(x)^{2-\alpha}[\nabla Z(\theta) + f(x)].
		\end{eqnarray*}
		Substituting this in (\ref{1eqn:score_equation_B_alpha_calculated}), we get
		\begin{equation*}
		\mathbb{E}_\theta[f(X)] = \bar{f},
		\end{equation*}
		which is same as the projection equation for $B_\alpha$-divergence.
		\vspace*{0.2cm}
		
		\item[(c)] {\em Robust estimation based on (\ref{1eqn:score_equation_I_alpha}) on $\mathbb{M}^{(\alpha)}$}:
		\vspace*{0.2cm}
		
		The estimating equation (\ref{1eqn:score_equation_I_alpha}) can be re-written as
		\begin{eqnarray}
		\label{1eqn:score_equation_I_alpha_calculated_2}
		\dfrac{\frac{1}{n}\sum\limits_{i=1}^n P_\theta(X_i)^{\alpha-2}\nabla P_\theta(X_i)}{\frac{1}{n}\sum\limits_{i=1}^n P_\theta(X_i)^{\alpha-1}} =
		\dfrac{\sum\limits_{x\in\mathcal{X}}P_\theta(x)^{\alpha-1}\nabla P_\theta(x)}{\sum\limits_{x\in\mathcal{X}}P_\theta(x)^{\alpha}}.
		\end{eqnarray}
		Now if $P_\theta\in \mathbb{M}^{(\alpha)}$, then from (\ref{1eqn:power_law_family_formula}) we have
		\begin{equation}
		\label{1eqn:alpha_power_law_formula}
		P_\theta(x)^{\alpha-1} = Z(\theta)^{1-\alpha}\big[ Q(x)^{\alpha-1}+(1-\alpha)\theta^Tf(x)\big]\quad \text{for } x\in \mathcal{X}.
		\end{equation}
		Differentiating (\ref{1eqn:alpha_power_law_formula}) with respect to $\theta$, we get
		\begin{eqnarray}
		\label{1eqn:derivative_of_P_theta_in_I_family}
		\lefteqn{P_\theta(x)^{\alpha-2}\nabla P_\theta(x)}\nonumber\\
		&& = -Z(\theta)^{-\alpha}\nabla Z(\theta)\big[Q(x)^{\alpha-1}+(1-\alpha)\theta^Tf(x)\big] - Z(\theta)^{1-\alpha}f(x)\nonumber\\
		&& = -Z(\theta)^{1-\alpha}\left\{ Z(\theta)^{-1}\nabla Z(\theta)\big[Q(x)^{\alpha-1}+(1-\alpha)\theta^Tf(x)\big] + f(x)\right\}.
		\end{eqnarray}
		Using (\ref{1eqn:alpha_power_law_formula}) and (\ref{1eqn:derivative_of_P_theta_in_I_family}), left-hand side of (\ref{1eqn:score_equation_I_alpha_calculated_2}) becomes
		\begin{eqnarray*}
			-\dfrac{\frac{1}{n}\sum\limits_{j=1}^n\Big\lbrace Z(\theta)^{-1}\nabla Z(\theta)\big[Q(X_j)^{\alpha-1}+(1-\alpha)
				\theta^Tf(X_j) \big] + f(X_j)\Big\rbrace}{\frac{1}{n}\sum\limits_{j=1}^n\big[Q(X_j)^{\alpha-1}+(1-\alpha)
				\theta^Tf(X_j) \big]}\\
			=- Z(\theta)^{-1}\nabla Z(\theta) - \dfrac{\bar{f}}{\overline{Q^{\alpha-1}}+(1-\alpha)\theta^T\bar{f}}.
		\end{eqnarray*}
		Similarly, using (\ref{1eqn:alpha_power_law_formula}) and (\ref{1eqn:derivative_of_P_theta_in_I_family}), one can show that the right-hand side of (\ref{1eqn:score_equation_I_alpha_calculated_2}) is
		\begin{eqnarray*}
			-Z(\theta)^{-1}\nabla Z(\theta) - \dfrac{\sum\limits_{x\in\mathcal{X}}f(x)\big[Q(x)^{\alpha-1}+(1-\alpha)\theta^Tf(x) \big]^{\frac{1}{\alpha-1}}}
			{\sum\limits_{x\in\mathcal{X}}\big[Q(x)^{\alpha-1}+(1-\alpha)
				\theta^Tf(x) \big]^{\frac{\alpha}{\alpha-1}}}.
		\end{eqnarray*}
		Hence (\ref{1eqn:score_equation_I_alpha_calculated_2}) is same as
		\begin{equation*}
		\dfrac{\bar{f}}{\overline{Q^{\alpha-1}}+(1-\alpha)\theta^T\bar{f}} = \dfrac{\sum\limits_{x\in\mathcal{X}}f(x)\big[Q(x)^{\alpha-1}+(1-\alpha)\theta^Tf(x) \big]^{\frac{1}{\alpha-1}}}
		{\sum\limits_{x\in\mathcal{X}}\big[Q(x)^{\alpha-1}+(1-\alpha)
			\theta^Tf(x) \big]^{\frac{\alpha}{\alpha-1}}}.
		\end{equation*} 
		Using (\ref{1eqn:alpha_power_law_formula}) the above can be re-written as
		\begin{eqnarray}
		\label{1eqn:equation_for_estimator_from_estimating_equation}
		\dfrac{\bar{f}}{\overline{Q^{\alpha-1}} + (1-\alpha)\theta^T\bar{f}} = \dfrac{\mathbb{E}_\theta [f(X)]}{\mathbb{E}_\theta [Q(X)^{\alpha-1}+(1-\alpha)
			\theta^Tf(X)]}.
		\end{eqnarray}
		Hence the estimator $P_\theta$ must satisfy the above equation.
		\vspace*{0.2cm}
		
		On the other hand, the projection equation (\ref{1eqn:projection_equation_of_I_alpha_div}) can be re-written as
		\begin{equation*}
		\mathbb{E}_\theta\big[f_i(X)\big]=\dfrac{\bar{f_i}}{\overline{Q^{\alpha-1}}}
		\mathbb{E}_\theta\big[Q(X)^{\alpha-1}\big],\quad \text{ for } i = 1,\dots,k.
		\end{equation*}
		This implies
		\begin{equation*}
		\theta^T\mathbb{E}_\theta\big[f(X)\big] = \dfrac{\theta^T\bar{f}}{\overline{Q^{\alpha-1}}}
		\mathbb{E}_\theta\big[Q(X)^{\alpha-1}\big].
		\end{equation*}
		The above two equations can be combined to
		\begin{equation*}
		\dfrac{\bar{f}_i}{\mathbb{E}_\theta\big[f_i(X)\big]} = 
		\dfrac{\overline{Q^{\alpha-1}}}{\mathbb{E}_\theta\big[Q(X)^{\alpha-1}\big]}=
		\dfrac{(1-\alpha)\theta^T\bar{f}}{(1-\alpha)\theta^T\mathbb{E}_\theta\big[f(X)\big]}, \quad\text{ for } i = 1,\dots,k.
		\end{equation*}
		This implies that the reverse projection $P_\theta$ must satisfy
		\begin{equation}
		\label{1eqn:estimating_equation_I_alpha_simplified}
		\dfrac{\bar{f}_i}{\mathbb{E}_\theta\big[f_i(X)\big]}=
		\dfrac{\overline{Q^{\alpha-1}}+(1-\alpha)\theta^T\bar{f}}{\mathbb{E}_\theta\big[
			Q(X)^{\alpha-1}+(1-\alpha)\theta^Tf(X)\big]}, \quad\text{ for } i = 1,\dots,k.
		\end{equation}
		This is same as (\ref{1eqn:equation_for_estimator_from_estimating_equation}) derived from the estimating equation. Hence any solution of the projection equation (\ref{1eqn:projection_equation_of_I_alpha_div}) is a solution of the estimating equation (\ref{1eqn:equation_for_estimator_from_estimating_equation}). Hence, it suffices to show that, if (\ref{1eqn:equation_for_estimator_from_estimating_equation}) has a solution, it is unique. To this end, we define $\Phi := (\phi_1,\dots,\phi_k)$, where
		\begin{equation}
		\label{1eqn:expression_of_phi_i_1}
		\phi_i(\theta) := \dfrac{\mathbb{E}_\theta\big[f_i(X)\big]\big[\overline{Q^{\alpha-1}}+(1-\alpha)\theta^T\bar{f}\big]}{\mathbb{E}_\theta\big[Q(X)^{\alpha-1}+(1-\alpha)\theta^T f(X)\big]} = \frac{\mathbb{E}_\theta [f_i(X)]\overline{P_\theta^{\alpha-1}}}{\|P_\theta\|^\alpha}, \quad\text{ for } i = 1,\dots,k,
		\end{equation}
		where the second equality follows from (\ref{1eqn:alpha_power_law_formula}). Then (\ref{1eqn:equation_for_estimator_from_estimating_equation}) reduces to solving
		\begin{equation}
		\label{1eqn:estimating_equation_I_alpha_final}
		\Phi(\theta) = \bar{f}
		\end{equation}
		for $\theta$. Thus it remains to show that the mapping $\theta\mapsto \Phi(\theta)$ is one-one. This is indeed the case. The proof is given in Appendix \ref{1sec:phi_is_oneone}.
		
		\vspace{0.2cm}
		
		\item[(d)] {\em Robust estimation based on (\ref{1eqn:score_equation_for_D_alpha}) on $\mathscr{E}_{\alpha}$}:
		\vspace{0.2cm}
		
		This clearly follows from (c) in view of lemmas \ref{1lem:equivalence_of_E_alpha_and_I_alpha}
		and \ref{1lem:connection_of_estimating_equation_and_projection_equation}.
	\end{enumerate}
\end{proof}
\section{Projection Theorems and the Principle of Sufficiency}
\label{1sec:Projection_theorems_sufficiency_principle}
In the previous section we studied four estimation problems respectively on four statistical models from the perspectives of projection theorems and estimating equations. We observed that the estimating equations or the projection equations arising from these estimation problems depended on the given sample only through some specific statistics of the sample. In this section we explore this observation further from the principle of sufficiency. The following result, known as {\em factorization theorem} \cite[Th.~6.2.6]{CasellaB02B}, tells us a way to identify the sufficient statistics from the log-likelihood function.

Factorization Theorem: Let $X_1,\dots, X_n$ be an i.i.d. random sample drawn according to some model $P_\theta, \theta\in\Theta$. A statistic $T(X_1,\dots,X_n)$ is a sufficient statistic for $\theta$ if and only if there exists functions $g(\theta, T(X_1,\dots,X_n))$ and $h(X_1,\dots,X_n)$ such that the log-likelihood function $L(\theta)$ in (\ref{1eqn:log_likelihood_function}) can be written as
\begin{eqnarray}
\label{1eqn:criterion_for_sufficient_statistic}
L(\theta) = g(\theta, T(X_1,\dots,X_n)) + h(X_1,\dots,X_n),
\end{eqnarray}
for all sample points $(X_1,\dots,X_n)$ and for all $\theta\in\Theta$.

However, when the data set is contaminated, we saw in section \ref{1sec:introduction} that one should use some modified likelihood
function for inference instead of the usual log-likelihood function. The quantities in (\ref{1eqn:likelihood_function_for_D_alpha}), (\ref{1eqn:likelihood_function_for_B_alpha})
and (\ref{1eqn:likelihood_function_for_I_alpha}) are some examples of modified likelihood functions. Suppose that $L_G$ is one of the modified likelihood functions in (\ref{1eqn:likelihood_function_for_D_alpha}) - (\ref{1eqn:likelihood_function_for_I_alpha}). The robust estimator of $\theta$ is given by
\begin{eqnarray}
\theta_E := \displaystyle\arg\max_\theta L_G(\theta).
\end{eqnarray}
Now, if we can analogously write the modified likelihood function $L_G$ as in (\ref{1eqn:criterion_for_sufficient_statistic}), that is, if
\begin{eqnarray}
L_G(\theta) = g (\theta, T(X_1,\dots,X_n)) + h(X_1,\dots, X_n),
\end{eqnarray}
for some functions $g$ and $h$, then we have
\begin{align*}
\theta_E & = \displaystyle\arg\max_\theta L_G(\theta)\\
& = \displaystyle\arg\max_\theta [g (\theta, T(X_1,\dots,X_n)) + h(X_1,\dots, X_n)]\\
& = \displaystyle\arg\max_\theta g (\theta, T(X_1,\dots,X_n)).
\end{align*}
This means that the robustified MLE depends on the sample only through the function $T(\cdot)$. Thus it is reasonable to call such $T(\cdot)$ a sufficient statistics for $\theta$ with respect to the modified likelihood function. In view of this, in the following theorem, we
find the sufficient statistics for the parameters of each of the families $\mathcal{E}$, $\mathbb{B}^{(\alpha)}$, $\mathbb{M}^{(\alpha)}$ and $\mathscr{E}_\alpha$ when the appropriate likelihood function is used.
\begin{theorem}
	\label{1thm:sufficient_statistics_for_all}
	Let $X_1,\dots, X_n$ be an i.i.d. random sample drawn according to one of the statistical models $\mathcal{E}$, $\mathbb{B}^{(\alpha)}$, $\mathbb{M}^{(\alpha)}$ or $\mathscr{E}_\alpha$. Then the following statements hold.
	\begin{itemize}
		\item[(a)] For $\mathcal{E}$, $T_1(X_1,\dots,X_n) = \bar{f}$ is a sufficient statistic for $\theta$ with respect to the usual log-likelihood function $L(\theta)$ in (\ref{1eqn:log_likelihood_function}).
		
		\item[(b)] For $\mathbb{B}^{(\alpha)}$, $T_2(X_1\dots,X_n) = \bar{f}$ is a sufficient statistic for $\theta$ with respect to the likelihood function $L_2^{(\alpha)}(\theta)$ in (\ref{1eqn:likelihood_function_for_B_alpha}).
		
		\item[(c)] For $\mathbb{M}^{(\alpha)}$, $T_3(X_1\dots,X_n) = {\bar{f}}/{\overline{Q^{\alpha-1}}}$ is a sufficient statistic for $\theta$ with respect to the likelihood function $L_3^{(\alpha)}(\theta)$ in (\ref{1eqn:likelihood_function_for_I_alpha}).
		
		\item[(d)] For $\mathscr{E}_{\alpha}$, $T_4(X_1\dots,X_n) = {{\overline{f}}^{(\alpha)}}/{{\overline{Q^{1-\alpha}}}^{(\alpha)}}$ is a sufficient statistic for $\theta$ with respect to the likelihood function $L_1^{(\alpha)}(\theta)$ in (\ref{1eqn:likelihood_function_for_D_alpha}).
	\end{itemize}
\end{theorem}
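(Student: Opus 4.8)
The plan is to handle all four parts by a single device: take the likelihood function appropriate to the family in question, substitute the explicit parametric form of $P_\theta$, and rearrange the result into the additive form $g(\theta,T_i)+h(X_1,\dots,X_n)$ demanded by the Factorization Theorem, in such a way that the sample enters $g$ only through the claimed statistic $T_i$ while $h$ absorbs all the purely sample-dependent terms. Once such a split is exhibited, sufficiency of $T_i$ (in the sense introduced just before the statement) is immediate, and the normalization-type functions ($Z(\theta)$ and $\sum_x P_\theta(x)^\alpha$) always go into $g$ since they depend on $\theta$ alone.

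For part (a) I would start from (\ref{1eqn:ln_exponential_probability}) and average over the sample to get $L(\theta)=-\log Z(\theta)+\theta^T\bar f+\tfrac1n\sum_j\log Q(X_j)$; reading off $g(\theta,\bar f)=-\log Z(\theta)+\theta^T\bar f$ and $h=\tfrac1n\sum_j\log Q(X_j)$ settles $T_1=\bar f$. Part (b) is the same in spirit using the non-normalized family: from the representation $P_\theta(x)^{\alpha-1}=Q(x)^{\alpha-1}+(1-\alpha)\{Z(\theta)+\theta^Tf(x)\}$ recorded in the proof of the main theorem (see (\ref{1eqn:B_alpha_family_formula})), averaging the first sum of $L_2^{(\alpha)}$ in (\ref{1eqn:likelihood_function_for_B_alpha}) gives $\tfrac1n\sum_jP_\theta(X_j)^{\alpha-1}=\overline{Q^{\alpha-1}}+(1-\alpha)Z(\theta)+(1-\alpha)\theta^T\bar f$. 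Substituting and using $\tfrac{\alpha}{\alpha-1}(1-\alpha)=-\alpha$ collapses the $\theta$-dependence into $-\alpha Z(\theta)-\alpha\theta^T\bar f-\sum_xP_\theta(x)^\alpha$, a function of $\theta$ and $\bar f$ only, plus the sample-only remainder $\tfrac{\alpha}{\alpha-1}\overline{Q^{\alpha-1}}-\tfrac1{\alpha-1}$; hence $T_2=\bar f$.

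For part (c) I would substitute the power-law identity (\ref{1eqn:alpha_power_law_formula}) into $L_3^{(\alpha)}$ of (\ref{1eqn:likelihood_function_for_I_alpha}). Averaging yields $\tfrac1n\sum_jP_\theta(X_j)^{\alpha-1}=Z(\theta)^{1-\alpha}\big[\overline{Q^{\alpha-1}}+(1-\alpha)\theta^T\bar f\big]$, and the key manipulation is to factor the bracket as $\overline{Q^{\alpha-1}}\big(1+(1-\alpha)\theta^T T_3\big)$ with $T_3=\bar f/\overline{Q^{\alpha-1}}$. Taking the logarithm then peels off the sample-only term $\tfrac{\alpha}{\alpha-1}\log\overline{Q^{\alpha-1}}$, while the $\theta$-part together with $-\log\sum_xP_\theta(x)^\alpha$ depends on the sample only through $T_3$, which is precisely the required factorization.

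The part that needs the most care is (d), because $L_1^{(\alpha)}$ in (\ref{1eqn:likelihood_function_for_D_alpha}) is written with the weights $\widehat P(x)^\alpha$ rather than as an ordinary sample average. Here the main idea is to pass to the $\alpha$-scaled (escort) measure $\widehat P^{(\alpha)}$: using (\ref{1eqn:E_alpha_family_formula}) to write $P_\theta(x)^{1-\alpha}=Z(\theta)^{-(1-\alpha)}\big[Q(x)^{1-\alpha}+(1-\alpha)\theta^Tf(x)\big]$ and rewriting $\sum_x\widehat P(x)^\alpha(\cdot)=\big(\sum_y\widehat P(y)^\alpha\big)\sum_x\widehat P^{(\alpha)}(x)(\cdot)$, one obtains $\sum_x\widehat P(x)^\alpha P_\theta(x)^{1-\alpha}=Z(\theta)^{-(1-\alpha)}\big(\sum_y\widehat P(y)^\alpha\big)\overline{Q^{1-\alpha}}^{(\alpha)}\big[1+(1-\alpha)\theta^TT_4\big]$ with $T_4=\overline{f}^{(\alpha)}/\overline{Q^{1-\alpha}}^{(\alpha)}$. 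Applying $\tfrac1{1-\alpha}\log$ then splits off $g(\theta,T_4)=-\log Z(\theta)+\tfrac1{1-\alpha}\log\big(1+(1-\alpha)\theta^TT_4\big)$ and leaves the sample-only logarithms in $h$. Recognizing that the escort normalization is exactly what converts the raw $\widehat P^\alpha$-weights into the ratio $T_4$ is the one genuinely non-routine observation; alternatively (d) could be deduced from (c) through the correspondence of Lemmas \ref{1lem:equivalence_of_E_alpha_and_I_alpha} and \ref{1lem:connection_of_estimating_equation_and_projection_equation}, but the direct computation appears cleaner.
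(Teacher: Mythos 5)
Your proposal is correct and follows essentially the same route as the paper: substitute the explicit parametric form of $P_\theta$ for each family into the corresponding likelihood function and split the result as $g(\theta,T_i)+h(X_1,\dots,X_n)$, with the ratio statistics in parts (c) and (d) emerging by factoring out $\overline{Q^{\alpha-1}}$ (respectively the escort-weighted $\overline{Q^{1-\alpha}}^{(\alpha)}$) inside the logarithm. The only cosmetic differences are that the paper cites \cite[Th.~6.2.10]{CasellaB02B} for part (a) where you compute directly, and minor regroupings of $\theta$-free constants between $g$ and $h$, neither of which affects the argument.
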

\begin{proof}
	\begin{itemize}
		
		\item[(a)] See \cite[Th.~6.2.10]{CasellaB02B}.
		\vspace*{0.2cm}
		
		\item[(b)]	For $P_\theta\in \mathbb{B}^{(\alpha)}$, using (\ref{1eqn:B_alpha_family_formula}), the likelihood function in (\ref{1eqn:likelihood_function_for_B_alpha}) can be re-written as
		\begin{eqnarray*}
		L_2^{(\alpha)} (\theta) & = & \frac{1}{n}\sum\limits_{j=1}^n\bigg[\frac{\alpha\big\{Q(X_j)^{\alpha-1} + (1-\alpha)[Z(\theta)+\theta^Tf(X_j)]\big\} - 1}{\alpha-1}\bigg] - \sum\limits_{x\in\mathcal{X}}P_\theta(x)^\alpha\\
		& = & \frac{\alpha}{\alpha-1} \overline{Q^{\alpha-1}} - \Big[\alpha Z(\theta) + \frac{1}{\alpha-1}  + \alpha \theta^T\bar{f} + \sum\limits_{x\in\mathcal{X}}P_\theta(x)^\alpha\Big]\\ 
		& = & h(X_1,\dots, X_n) + g(\theta, T_2(X_1,\dots,X_n)),
		\end{eqnarray*}
		where
		\begin{flushleft}
			$h(X_1,\dots, X_n) := \tfrac{\alpha}{\alpha-1} \overline{Q^{\alpha-1}}$,\\
			$T_2(X_1,\dots,X_n) := \bar{f}$,\\
			and
			$g(\theta, T_2(X_1,\dots,X_n)) := -\Big[\alpha Z(\theta) + \frac{1}{\alpha-1}  + \alpha \theta^T\bar{f} + \sum\limits_{x\in\mathcal{X}}P_\theta(x)^\alpha\Big]$.
		\end{flushleft}	
		Hence $T_2(X_1,\dots,X_n) = \bar{f}$ is a sufficient statistics for $\theta$.
		\vspace*{0.2cm}
		
		\item[(c)] For $P_\theta\in \mathbb{M}^{(\alpha)}$, using (\ref{1eqn:alpha_power_law_formula}), the likelihood function in (\ref{1eqn:likelihood_function_for_I_alpha}) can be re-written as
		\begin{eqnarray*}
		\lefteqn{L_3^{(\alpha)}(\theta)}\\
		& & = \frac{\alpha}{\alpha-1}\log\Big[\frac{1}{n}\sum\limits_{j=1}^n 
		Z(\theta)^{1-\alpha} \big\{ Q(X_j)^{\alpha-1} + (1-\alpha) \theta^T f(X_j)\big\}
		\Big]-\log\sum\limits_{x\in\mathcal{X}}P_\theta(x)^\alpha\\
		& & = -\alpha \log Z(\theta) +\frac{\alpha}{\alpha-1} \log \big[ \overline{Q^{\alpha-1}}
		+ (1-\alpha) \theta^T\bar{f}\big]
		- \log\sum\limits_{x\in\mathcal{X}}P_\theta(x)^\alpha\\
		& & =  - \alpha \log Z(\theta) + \frac{\alpha}{\alpha-1} \log \overline{Q^{\alpha-1}} + 
		\frac{\alpha}{\alpha-1} \log \big[ 1+ (1-\alpha) \big\{{\theta^T\bar{f}}/
		{\overline{Q^{\alpha-1}}}\big\}\big]\\
		&& \hspace*{8cm} - \log\sum\limits_{x\in\mathcal{X}}P_\theta(x)^\alpha\\
		& & =  h(X_1,\dots, X_n) + g\big(\theta, T_3(X_1,\dots,X_n)\big),
		\end{eqnarray*}
		where
			$T_3(X_1,\dots,X_n) := {\bar{f}}/{\overline{Q^{\alpha-1}}}$, $h(X_1,\dots, X_n) := \frac{\alpha}{\alpha-1} \log \overline{Q^{\alpha-1}}$, and
			\begin{eqnarray*}
			g(\theta, T_3(X_1,\dots,X_n)) := - \alpha \log Z(\theta) + \frac{\alpha}{\alpha-1} \log \big[ 1+ (1-\alpha) \big\{{\theta^T\bar{f}}/{\overline{Q^{\alpha-1}}}\big\}\big]\\	- \log\sum\limits_{x\in\mathcal{X}}P_\theta(x)^\alpha.
			\end{eqnarray*}
		Hence $T_3(X_1,\dots,X_n) = {\bar{f}}/{\overline{Q^{\alpha-1}}}$ is a sufficient statistics for $\theta$.
		\vspace*{0.2cm}
		
		\item[(d)] For $P_\theta\in \mathscr{E}_\alpha$, using (\ref{1eqn:E_alpha_family_formula}), the likelihood function in (\ref{1eqn:likelihood_function_for_D_alpha}) can be re-written as
		\begin{eqnarray*}
		L_1^{(\alpha)}(\theta)	& = & \dfrac{1}{1- \alpha}\log\Big[\sum\limits_{x\in\mathcal{X}}\widehat{P}(x)^\alpha Z(\theta)^ {\alpha - 1} \big\lbrace Q(x)^{1-\alpha} +(1-\alpha)\theta^T f(x)\big\rbrace\Big]\\
		& = & -\log Z(\theta) + \dfrac{1}{1- \alpha}\log\Big[1 + (1-\alpha) {\theta^T{\overline{f}}^{(\alpha)}}/{{\overline{Q^{1-\alpha}}}^{(\alpha)}}\Big]\\
		& & \hspace*{3cm} + \dfrac{1}{1- \alpha}\log\sum\limits_{x\in\mathcal{X}}\widehat{P}(x)^\alpha Q(x)^{1-\alpha}\\
		& = & h(X_1,\dots,X_n) + g(\theta, T_4(X_1,\dots, X_n)),
		\end{eqnarray*}
		where
		
			$T_4(X_1,\dots,X_n) := {{\overline{f}}^{(\alpha)}}/{{\overline{Q^{1 - \alpha}}}^{(\alpha)}}$ with ${\overline{f}}^{(\alpha)}$ 
			and ${\overline{Q^{1 - \alpha}}}^{(\alpha)}$ being,  respectively, averages of $f(\cdot)$ and $Q(\cdot)^{ 1 - \alpha}$ with respect to 
			the measure $\widehat{P}^{(\alpha)}$,
			\vspace*{0.2cm}
			
			$g(\theta, T_4(X_1,\dots,X_n)) := -\log Z(\theta) + \dfrac{1}{1- \alpha}\log\Big[1 + (1-\alpha)
			\left\{{\theta^T{\overline{f}}^{(\alpha)}} / {{\overline{Q^{1 - \alpha}}}^{(\alpha)}}\right\}\Big]$, and
			\vspace*{0.2cm}
			
			$h(X_1,\dots,X_n) := \tfrac{1}{1- \alpha}\log {\sum\limits_{x\in\mathcal{X}} \widehat{P} (x)^\alpha
				Q(x)^{1 - \alpha}}$.
	
		Hence $T_4(X_1,\dots,X_n) = {{\overline{f}}^{(\alpha)}}/{{\overline{Q^{1 - \alpha}}}^{(\alpha)}}$ is a 
		sufficient statistic for $\theta$.
	\end{itemize}
\end{proof}
Thus the sufficient statistics $\bar{f}$, ${\bar{f}}/{\overline{Q^{\alpha-1}}}$, and ${{\overline{f}}^{(\alpha)}}/{{\overline{Q^{1 - \alpha}}}^{(\alpha)}}$ are precisely the statistics of the sample that influence the projection equations (\ref{1eqn:projection_equation_of_KL_div}), (\ref{1eqn:projection_equation_of_I_alpha_div}), and (\ref{1eqn:projection_equation_of_D_alpha_div1}) respectively.
\section{Summary}
\label{1sec:summary}

In this paper we studied four estimation problems on four statistical models distinctively associated with them. Each of these estimation problems can be regarded as a reverse projection problem of one of the divergences $I$, $B_{\alpha}$, $\mathscr{I}_{\alpha}$ and $D_{\alpha}$, on one of these statistical models. Projection theorems tell us that, the reverse projection, if exists, is the unique solution of an equation, called projection equation. While the projection equations are simple and easy to interpret in the case of $I$ and $B_{\alpha}$-divergences, it is not so in the case of $\mathscr{I}_{\alpha}$ and $D_{\alpha}$-divergences. Our objective in this paper was to understand these projection theorems (that is, projection equations) from the usual statistical method of solving by estimating equations. In the case of $I$ and $B_\alpha$-divergences, we saw that the projection equations and the estimating equations are the same and both depended only on the statistics of $f$ which is one of the defining entity of the underlying statistical model. However, in the case of $\mathscr{I}_\alpha$ and $D_\alpha$-divergences, the estimating equations and the projection equations are different and both depended not only on the statistics of $f$ but also on $Q$ which is other defining entity of the model. Nevertheless, we showed that, in the case of $\mathscr{I}_\alpha$ and $D_\alpha$ also, the solutions of the estimation problems obtained both from the projection equations and from the estimating equations are the same. We then tried to understand the projection theorems via sufficient statistics in the sense of generalized likelihood functions and observed that the estimating equations (or the projection equations) depended on the given sample only through these sufficient statistics. 

\appendix

\section{Projection Theorem for Density Power Divergence}
\label{1sec:the_projection_geometry_of_density_power_divergence}
The Projection theorem and the Pythagorean property of the more general class of Bregman divergences were established by Csisz\'ar and Mat\'u\v{s} \cite{CsiszarM12J} in generality using tools from convex analysis. The density power divergences $B_\alpha$ is a subclass of the Bregman divergences. However, it is hard to extract the results for the $B_\alpha$-divergence from \cite{CsiszarM12J}.  Hence we derive those results for the $B_\alpha$-divergence using some elementary tools as in \cite{CsiszarS04B} for the KL-divergence. We must point out that the geometry of $B_\alpha$-divergence is quite a natural extension of that of KL-divergence. Here we assume that $\mathcal{P} := \mathcal{P}(\mathcal{X})$ is the space of all probability measures on $\mathcal{X}$. Let $B_\alpha$ be as defined in (\ref{1defn:B_alpha_divergence})\footnote{We assume the usual convention that $B_\alpha(P,Q) = \infty$ when $P\not\ll Q$ and $\alpha <1$.}. Let us also recall the definitions of reverse and forward projections given in (\ref{1eqn:reverse_projection}) and (\ref{1eqn:forward_projection}). 

\begin{definition}
	For $P\in \mathcal{P}$, the support of $P$ is defined as $\text{Supp}(P) = \{ x\in \mathcal{X} : P(x)>0 \}$. For
	$\mathbb{C}\subset \mathcal{P}$, we will denote $\text{Supp}(\mathbb{C})$ for the union of support of members of $\mathbb{C}$. 
\end{definition}

First we show that, while the Pythagorean inequality is always satisfied when the forward projection is on a closed convex set, equality is satisfied when the forward projection is on a linear family and $\alpha <1$. Here, in the sequel, we assume that $\text{Supp}(Q) = \mathcal{X}$.

\begin{theorem}
	\label{1thm:pythagorean_theorem}
	Let $\alpha >0, \alpha\neq 1$. Let $P^*$ be the forward $B_\alpha$-projection of $Q$ on a closed and convex
	set $\mathbb{C}$. Then
	\begin{equation}
	\label{1eqn:Pythagorean_inequality}
	B_\alpha (P,Q) \geq B_\alpha (P,P^*)+B_\alpha(P^*,Q) \quad \forall P\in \mathbb{C}.
	\end{equation}
	Further, if $\alpha <1$, $\text{Supp}(\mathbb{C}) = \text{Supp}(P^*)$.
\end{theorem}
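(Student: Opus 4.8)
The plan is to recognize $B_\alpha$ as a separable, strictly convex Bregman divergence and then to run a first-order optimality argument along segments of $\mathbb{C}$, precisely in the spirit of the KL-divergence proof of Csisz\'ar and Shields. Writing $\psi(t):=t^\alpha/(\alpha-1)$ (strictly convex on $(0,\infty)$ for every $\alpha>0,\alpha\neq1$, since $\psi''(t)=\alpha t^{\alpha-2}>0$), a direct expansion gives $B_\alpha(P,Q)=\sum_x[\psi(P(x))-\psi(Q(x))-\psi'(Q(x))(P(x)-Q(x))]$, and hence the three-point identity
\begin{equation*}
B_\alpha(P,Q)-B_\alpha(P,P^*)-B_\alpha(P^*,Q)=\frac{\alpha}{1-\alpha}\sum_{x}\big(Q(x)^{\alpha-1}-P^*(x)^{\alpha-1}\big)\big(P(x)-P^*(x)\big).
\end{equation*}
Thus the Pythagorean inequality (\ref{1eqn:Pythagorean_inequality}) is equivalent to the nonnegativity of the right-hand side for every $P\in\mathbb{C}$.

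To establish that nonnegativity I would use the convexity of $\mathbb{C}$: for a fixed $P\in\mathbb{C}$ put $P_\lambda:=(1-\lambda)P^*+\lambda P\in\mathbb{C}$, $\lambda\in[0,1]$. Since $P^*$ minimizes $B_\alpha(\cdot,Q)$ over $\mathbb{C}$, the scalar map $\lambda\mapsto B_\alpha(P_\lambda,Q)$ is minimized at the left endpoint $\lambda=0$, so its right derivative there is $\geq0$ whenever finite. Differentiating the sum term-by-term, using $\frac{d}{d\lambda}P_\lambda(x)^\alpha=\alpha P_\lambda(x)^{\alpha-1}(P(x)-P^*(x))$, shows that this right derivative equals exactly the right-hand side of the three-point identity. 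When $\alpha>1$ the map $t\mapsto t^\alpha$ is $C^1$ at $0$ with vanishing derivative, so the derivative is automatically finite (with the convention $P^*(x)^{\alpha-1}=0$ at zeros of $P^*$) and (\ref{1eqn:Pythagorean_inequality}) follows for all $\alpha>1$ with no restriction on supports.

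The main obstacle, and the reason the support statement is forced when $\alpha<1$, is the behaviour of $\lambda\mapsto P_\lambda(x)^\alpha$ at $\lambda=0$ on coordinates where $P^*(x)=0$. If $0<\alpha<1$ and some $P\in\mathbb{C}$ charged a point $x_0$ with $P^*(x_0)=0<P(x_0)$, then $P_\lambda(x_0)^\alpha=\lambda^\alpha P(x_0)^\alpha$ would contribute the term $-\tfrac{1}{1-\alpha}\lambda^\alpha P(x_0)^\alpha$ to $B_\alpha(P_\lambda,Q)$; since $\lambda^\alpha$ dominates the $O(\lambda)$ contributions of all other coordinates as $\lambda\downarrow0$, one would obtain $B_\alpha(P_\lambda,Q)<B_\alpha(P^*,Q)$ for small $\lambda>0$, contradicting the minimality of $P^*$ (equivalently, the convention $B_\alpha(P,P^*)=\infty$ for $P\not\ll P^*$ would render the right side of (\ref{1eqn:Pythagorean_inequality}) infinite). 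Hence no member of $\mathbb{C}$ can charge a point outside $\mathrm{Supp}(P^*)$, that is $\mathrm{Supp}(\mathbb{C})\subseteq\mathrm{Supp}(P^*)$; the reverse inclusion is immediate because $P^*\in\mathbb{C}$, giving $\mathrm{Supp}(\mathbb{C})=\mathrm{Supp}(P^*)$. With this support equality every $P\in\mathbb{C}$ satisfies $\mathrm{Supp}(P)\subseteq\mathrm{Supp}(P^*)$, so the perturbation above touches only coordinates with $P^*(x)>0$, the right derivative is finite, and the argument of the previous paragraph delivers (\ref{1eqn:Pythagorean_inequality}) for $\alpha<1$ as well.
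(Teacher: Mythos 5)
Your proposal is correct and takes essentially the same route as the paper: both perturb along the segment $P_\lambda=(1-\lambda)P^*+\lambda P$ inside the convex set $\mathbb{C}$, use first-order optimality of $P^*$ to reduce the Pythagorean inequality to nonnegativity of $\tfrac{\alpha}{\alpha-1}\sum_{x}\big[P(x)-P^*(x)\big]\big[P^*(x)^{\alpha-1}-Q(x)^{\alpha-1}\big]$, and derive the support claim for $\alpha<1$ from a blow-up at coordinates where $P^*$ vanishes but some $P\in\mathbb{C}$ does not. The differences are minor points of execution: you make the equivalence explicit via the Bregman three-point identity and get the support contradiction by $\lambda^\alpha$-domination of function values, whereas the paper invokes the mean value theorem and lets the derivative term diverge to $-\infty$ as $t\downarrow 0$.
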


\begin{proof}
	Let $P\in\mathbb{C}$ and define
	\begin{center}
		$P_t(\cdot) = (1-t) P^*(\cdot) + t P(\cdot)$, for $t\in [0,1]$.
	\end{center}
	Since $\mathbb{C}$ is convex, $P_t\in \mathbb{C}$. By mean-value theorem, we have for each $t\in (0,1)$, 
	\begin{eqnarray}
	\label{1eqn:derivative}
	0 & \leq & \frac{1}{t}\big[B_\alpha (P_t, Q) -B_\alpha (P^*, Q)\big]\nonumber\\
	& = & \frac{1}{t}\big[B_\alpha (P_t, Q) -B_\alpha (P_0, Q)\big]\nonumber\\
	& = & \frac{d}{dt}B_\alpha (P_t, Q)\big |_{t=\tilde{t}}, \quad \text{for some } \tilde{t}\in (0,t).
	\end{eqnarray}
	Using definition of $B_\alpha$, we have
	\begin{eqnarray*}
		\dfrac{d}{dt} B_\alpha (P_t, Q)=\dfrac{\alpha}{\alpha-1}\sum\limits_{x\in \mathcal{X}} \big[P(x)-P^*(x)\big]\big[P_t(x)^{\alpha-1}-Q(x)^{\alpha-1}\big].
	\end{eqnarray*}
	Therefore, using (\ref{1eqn:derivative}),
	\begin{equation}
	\label{1eqn:derivative_calculation}
	\dfrac {\alpha}{\alpha-1}\sum\limits_{x\in \mathcal{X}} \big[P(x)-P^*(x)\big]\big[P_{\tilde{t}} (x)^{\alpha-1}-Q(x)^{\alpha-1}\big] \geq 0.
	\end{equation}
	Hence, as $t\downarrow 0$, we have
	\begin{equation}
	\label{1eqn:pythagorean_equivalent}
	\dfrac {\alpha}{\alpha-1}\sum\limits_{x\in \mathcal{X}} \big[P(x)-P^*(x)\big]\big[P^*(x)^{\alpha-1}-Q(x)^{\alpha-1}\big] \geq 0,
	\end{equation}
	which is equivalent to (\ref{1eqn:Pythagorean_inequality}).
	
	If $\text{Supp}(P^*)\neq \text{Supp}(\mathbb{C})$, that is, if $P^*(x) = 0$ for some $x\in \mathcal{X}$ and some $P\in \mathbb{C}$ such that $x\in \text{Supp}(P)$, and if $\alpha <1$, then the left-hand side of (\ref{1eqn:derivative_calculation}) goes to $-\infty$ as $t\downarrow 0$, which contradicts (\ref{1eqn:derivative_calculation}). This proves the claim.
\end{proof}

If $\alpha>1$, in general, $\text{Supp}(P^*)\neq \text{Supp}(\mathbb{C})$. \cite[Example 2]{KumarS15J2} serves as a counterexample here as well.

We will now show that equality holds in (\ref{1eqn:Pythagorean_inequality}) for $\alpha <1$ when $\mathbb{C}$ is a linear family, which we shall define now.
\begin{definition}
	\label{1defn:linear_family}
	The {\em linear family}, determined by $k$ real valued functions $f_i, i=1,\dots, k$ on $\mathcal{X}$ and $k$ real numbers $a_i, i=1,\dots,k$, is defined as
	\begin{equation}
	\label{1eqn:linear_family}
	\mathbb{L} := \Big\{ P\in \mathcal{P} : \small\sum\limits_{x\in \mathcal{X}} P(x)f_i(x) = a_i,\quad i=1,\dots, k \Big\}.
	\end{equation} 
\end{definition} 
\begin{theorem}
	Let $P^*$ be the forward $B_\alpha$-projection of $Q$ on $\mathbb{L}$. The following hold.
	\begin{enumerate}
		\item[(a)] If $\alpha<1$, then the Pythagorean equality holds, that is,
		\begin{equation}
		\label{1eqn:pythagorean_equality}
		B_\alpha (P,Q) = B_\alpha (P,P^*)+B_\alpha(P^*,Q) \quad\forall P\in \mathbb{L}.
		\end{equation}
		
		\item[(b)] If $\alpha>1$ and if $\text{Supp}(P^*) = \text{Supp}(\mathbb{L})$, then the Pythagorean equality (\ref{1eqn:pythagorean_equality}) holds.
	\end{enumerate}
\end{theorem}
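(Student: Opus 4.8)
The plan is to upgrade the Pythagorean inequality of Theorem~\ref{1thm:pythagorean_theorem} to an equality by exploiting that $\mathbb{L}$ is \emph{affine}: unlike a general closed convex set, a linear family lets one perturb the projection $P^*$ in \emph{both} directions along the segment towards any competitor $P$. Recall from the proof of Theorem~\ref{1thm:pythagorean_theorem} that the Pythagorean relation (\ref{1eqn:pythagorean_equality}) is equivalent to the vanishing of the directional quantity
\begin{equation*}
\Delta(P) := \frac{\alpha}{\alpha-1}\sum_{x\in\mathcal{X}}\big[P(x)-P^*(x)\big]\big[P^*(x)^{\alpha-1}-Q(x)^{\alpha-1}\big],
\end{equation*}
and that the forward-projection property only delivered $\Delta(P)\ge 0$ (this is (\ref{1eqn:pythagorean_equivalent})). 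Thus it suffices to prove the reverse inequality $\Delta(P)\le 0$ for every $P\in\mathbb{L}$; together these force $\Delta(P)=0$, which is exactly (\ref{1eqn:pythagorean_equality}).

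For fixed $P\in\mathbb{L}$, I would again set $P_t(\cdot)=(1-t)P^*(\cdot)+tP(\cdot)$, but now allow $t$ to range over an open interval containing $0$. Since both $P$ and $P^*$ satisfy $\sum_x P(x)f_i(x)=a_i$, linearity gives $\sum_x P_t(x)f_i(x)=a_i$ for \emph{all} real $t$, so the affine constraints defining $\mathbb{L}$ are preserved; the only genuine restriction is non-negativity. Here the support hypothesis enters: if $P^*(x)>0$ for every $x\in\text{Supp}(\mathbb{L})$, then $P_t(x)>0$ on $\text{Supp}(\mathbb{L})$ and $P_t(x)=0$ off it for all small $|t|$, so $P_t\in\mathbb{L}$ for $t$ in a two-sided neighborhood of $0$. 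Because $P^*$ minimizes $B_\alpha(\cdot,Q)$ over $\mathbb{L}$, the map $t\mapsto B_\alpha(P_t,Q)$ then has an \emph{interior} minimum at $t=0$, and it is differentiable there (all relevant $P_t(x)$ are bounded away from $0$). The first-order condition $\frac{d}{dt}B_\alpha(P_t,Q)\big|_{t=0}=0$, evaluated with the derivative formula already computed in the proof of Theorem~\ref{1thm:pythagorean_theorem} (at $t=0$ one has $P_t=P^*$), yields precisely $\Delta(P)=0$.

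It then remains only to justify the full-support hypothesis in each case. For (a), $\alpha<1$, Theorem~\ref{1thm:pythagorean_theorem} applied with $\mathbb{C}=\mathbb{L}$ already guarantees $\text{Supp}(P^*)=\text{Supp}(\mathbb{L})$, so the two-sided perturbation is automatically legitimate. For (b), $\alpha>1$, the same support equality is assumed in the statement, so the argument carries over verbatim. I expect the main obstacle to be the legitimacy of the negative-$t$ perturbation: this is exactly where the proof of Theorem~\ref{1thm:pythagorean_theorem} had to stop, since for a general convex $\mathbb{C}$ only $t\downarrow 0$ was available, and it is what distinguishes a linear family from an arbitrary convex one. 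The full-support condition is precisely what keeps $P_t$ a valid probability measure for small negative $t$ and keeps $B_\alpha(P_t,Q)$ differentiable through $t=0$ when $\alpha<1$ (where $P^*(x)^{\alpha-1}$ would otherwise blow up at zeros of $P^*$). Once two-sided differentiability at an interior minimum is secured, the conclusion is just the vanishing of a one-variable derivative.
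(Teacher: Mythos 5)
Your proposal is correct and takes essentially the same route as the paper: both arguments exploit the affine structure of $\mathbb{L}$ together with the support condition $\text{Supp}(P^*)=\text{Supp}(\mathbb{L})$ (automatic for $\alpha<1$ by Theorem \ref{1thm:pythagorean_theorem}, assumed for $\alpha>1$) to perturb $P^*$ towards $P$ for negative $t$ as well, which reverses the inequality (\ref{1eqn:pythagorean_equivalent}) and forces the Pythagorean equality. The only cosmetic difference is that the paper derives the reversed inequality via the mean-value theorem for $t\in(t',0)$ and then lets $t\uparrow 0$, whereas you invoke differentiability at the interior minimum $t=0$ directly; the two formulations are interchangeable.
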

\begin{proof}
	(a) Let $P_t$ be as in Theorem \ref{1thm:pythagorean_theorem}. Since $\text{Supp}(P^*) = \text{Supp}(\mathbb{L})$, there exists $t' < 0$ such that $P_t = (1 -t)P^* + tP\in \mathbb{L}$ for $t\in (t',0)$. Hence, for every $t\in (t',0)$, there exists $\tilde{t}\in (t,0)$ such that
	\begin{equation*}
	\frac {\alpha}{\alpha-1}\small\sum\limits_{x\in \mathcal{X}} \big[P(x)-P^*(x)\big]\big[P_{\tilde{t}} (x)^{\alpha-1}-Q(x)^{\alpha-1}\big] \le 0.
	\end{equation*}
	Proceeding as in Theorem \ref{1thm:pythagorean_theorem}, we get (\ref{1eqn:pythagorean_equivalent}) with a reversed inequality. Thus we have equality in (\ref{1eqn:pythagorean_equivalent}). Hence (\ref{1eqn:pythagorean_equality}) holds.
	\vspace*{0.2cm}
	
	(b) Similar to (a).
\end{proof}

\begin{remark}
	When $\alpha >1$, equality in (\ref{1eqn:pythagorean_equality}) does not hold in general. \cite[Example 2]{KumarS15J2} serves as a counterexample here as well.
\end{remark}

We will now find an explicit expression of the forward $B_\alpha$-projection in both the cases $\alpha<1$ and $\alpha >1$ separately.
\begin{theorem}
	\label{1thm:forward_projection}
	Let $Q\in \mathcal{P}$ and consider a linear family $\mathbb{L}$ of probability measures as in (\ref{1eqn:linear_family}).
	\begin{enumerate}
		\item[(a)] If $\alpha<1$, the forward $B_\alpha$-projection $P^*$ of $Q$ on $\mathbb{L}$ satisfies
		\begin{eqnarray}
		\label{1eqn:forward_projection_alpha<1}
		P^*(x) = \big[Q(x)^{\alpha-1}+(1-\alpha)\big\lbrace Z+\sum\limits_{i=1}^k \theta_if_i(x)\big\rbrace\big]^{\frac{1}{\alpha-1}} \quad\forall x\in \text{Supp}(\mathbb{L}),
		\end{eqnarray}
		where $\theta_1,\dots,\theta_k$ are scalars and $Z$ is a constant.
		\item[(b)] If $\alpha>1$, the forward $B_\alpha$-projection $P^*$ of $Q$ on $\mathbb{L}$ satisfies
		\begin{eqnarray}
		\label{1eqn:forward_projection_alpha>1}
		P^*(x) = \big[Q(x)^{\alpha-1}+(1-\alpha)\big\lbrace Z+\sum\limits_{i=1}^k \theta_if_i(x)
		\big\rbrace\big]_+^{\frac{1}{\alpha-1}} \quad\forall x\in \mathcal{X},
		\end{eqnarray}
		where $\theta_1,\dots,\theta_k$ are scalars, $Z$ is a constant, and for any real number $r$, $[r]_+:=\max \lbrace r,0\rbrace$.
	\end{enumerate}
	
\end{theorem}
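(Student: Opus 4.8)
The plan is to characterize the forward $B_\alpha$-projection $P^*$ of $Q$ on $\mathbb{L}$ as the solution of a constrained minimization problem and extract its form from the Lagrangian first-order conditions, with the sign constraints handled via KKT multipliers in the $\alpha>1$ case. First I would write the objective $B_\alpha(P,Q)$ as a function of the probability vector $(P(x))_{x\in\mathcal{X}}$ and observe that it is convex in its first argument (as follows from the Bregman structure underlying $B_\alpha$), so that the minimum over the convex set $\mathbb{L}$ is characterized by stationarity conditions. The constraints are the $k$ linear equality constraints $\sum_x P(x) f_i(x) = a_i$ from the definition of $\mathbb{L}$ together with the normalization $\sum_x P(x) = 1$; the latter is what will produce the constant $Z$.

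For part (a), with $\alpha<1$, Theorem \ref{1thm:pythagorean_theorem} already guarantees that $\text{Supp}(P^*) = \text{Supp}(\mathbb{L})$, so on the support the positivity constraints $P(x) \ge 0$ are inactive and only the equality constraints matter. I would form the Lagrangian
\begin{equation*}
\mathcal{L}(P,\lambda_0,\lambda_1,\dots,\lambda_k) = B_\alpha(P,Q) - \lambda_0\Big(\sum_x P(x) - 1\Big) - \sum_{i=1}^k \lambda_i\Big(\sum_x P(x)f_i(x) - a_i\Big),
\end{equation*}
and set $\partial\mathcal{L}/\partial P(x) = 0$ for each $x\in\text{Supp}(\mathbb{L})$. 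Using
\begin{equation*}
\frac{\partial}{\partial P(x)} B_\alpha(P,Q) = \frac{\alpha}{\alpha-1}\big[Q(x)^{\alpha-1} - P(x)^{\alpha-1}\big],
\end{equation*}
the stationarity equation solves for $P^*(x)^{\alpha-1}$ as an affine function of the $f_i(x)$ plus a constant, and after absorbing multipliers into scalars $\theta_i$ and a constant $Z$ (rescaling by the factor $(\alpha-1)/\alpha$), raising to the power $1/(\alpha-1)$ yields exactly the claimed expression \eqref{1eqn:forward_projection_alpha<1}. The computation of the partial derivative and the bookkeeping of the multipliers is routine; the substantive input is the support identification from the previous theorem, which is what lets me ignore the inequality constraints.

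For part (b), with $\alpha>1$, the positivity constraints can be active because the projection need not share the support of $\mathbb{L}$, so I would instead invoke the KKT conditions with nonnegativity multipliers $\mu(x)\ge 0$ satisfying complementary slackness $\mu(x)P^*(x) = 0$. On points where $P^*(x)>0$ the analysis is as in (a), giving the interior formula; on points where $P^*(x)=0$ the complementary slackness forces the bracketed affine quantity to be nonpositive, so taking the positive part $[\,\cdot\,]_+$ correctly unifies both cases into \eqref{1eqn:forward_projection_alpha>1}. The main obstacle I anticipate is the rigorous justification of the boundary behaviour for $\alpha>1$: one must argue that $P^*$ genuinely vanishes on the complement of its support (rather than the derivative merely failing to exist there) and that the $[\,\cdot\,]_+$ truncation is consistent with the stationarity conditions in the interior. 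I expect this is handled exactly as in the KL-divergence treatment of \cite{CsiszarS04B}, which the authors explicitly cite as their template, so I would model the boundary argument on that reference and verify that the convexity of $B_\alpha$ makes the KKT conditions both necessary and sufficient.
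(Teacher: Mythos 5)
Your overall strategy is sound, and for part (b) it is exactly the paper's own proof: the paper sets up the constrained minimization (\ref{1eqn:min})--(\ref{1eqn:positivity_constraints}), invokes Lagrange/KKT multipliers via \cite[Prop.~3.3.7]{Bertsekas03B}, uses complementary slackness $\mu(x)P^*(x)=0$ on the support of $P^*$, and uses $\mu(x)\ge 0$ off the support to force the bracketed affine quantity to be nonpositive there, which yields the $[\cdot]_+$ formula. The boundary obstacle you anticipate is in fact a non-issue: for $\alpha>1$ the map $t\mapsto t^{\alpha}$ is $C^1$ on $[0,\infty)$, so $B_\alpha(\cdot,Q)$ is differentiable on the whole closed simplex and the first-order conditions apply directly. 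For part (a), however, you take a genuinely different route from the paper. The paper does not rerun a Lagrangian argument; instead it derives (a) from the Pythagorean \emph{equality} on $\mathbb{L}$ in the form (\ref{1eqn:pythagorean_equality_equivalent}), and then argues linear-algebraically: since $\sum_x\big[P(x)-P^*(x)\big]\big[P^*(x)^{\alpha-1}-Q(x)^{\alpha-1}\big]=0$ for all $P\in\mathbb{L}$, the vector $P^*(\cdot)^{\alpha-1}-Q(\cdot)^{\alpha-1}$, corrected by an additive constant, lies in $(\mathcal{F}^\perp)^\perp=\mathcal{F}$, the span of $f_1(\cdot)-a_1,\dots,f_k(\cdot)-a_k$, which is precisely the affine representation (\ref{1eqn:forward_projection_alpha<1}). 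Your Lagrangian route for (a) is equally valid --- the constraints are linear so no constraint qualification is needed, and the support identification from Theorem \ref{1thm:pythagorean_theorem} guarantees differentiability of the objective at $P^*$ in the coordinates of $\text{Supp}(\mathbb{L})$ --- and it has the merit of treating (a) and (b) by one uniform mechanism, whereas the paper's route for (a) additionally needs the small lemma that $\mathcal{F}^\perp$ is spanned by its probability vectors.

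One error you must fix: your partial derivative has the wrong sign. From (\ref{1defn:B_alpha_divergence}) one gets $\frac{\partial}{\partial P(x)}B_\alpha(P,Q)=\frac{\alpha}{\alpha-1}\big[P(x)^{\alpha-1}-Q(x)^{\alpha-1}\big]$, which is the paper's (\ref{1eqn:partial_derivative}), not $\frac{\alpha}{\alpha-1}\big[Q(x)^{\alpha-1}-P(x)^{\alpha-1}\big]$ as you wrote. In part (a) the slip is harmless, since the multipliers are unconstrained in sign and are absorbed into the arbitrary scalars $Z,\theta_1,\dots,\theta_k$. In part (b) it is not harmless: the nonpositivity of the bracket at points where $P^*(x)=0$ comes from combining $\mu(x)\ge 0$ with the \emph{correct} gradient sign; carrying out the same KKT bookkeeping with your sign yields instead that the bracket is nonnegative at those points, and the $[\cdot]_+$ conclusion no longer follows. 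So your stated conclusion for (b) is correct but does not follow from your stated derivative; once the sign is corrected, the argument goes through exactly as in the paper.
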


\begin{proof}
	(a) The linear family in (\ref{1eqn:linear_family}) can be re-written as
	\begin{equation}
	\label{1eqn:linear_family_equivalent}
	\mathbb{L}:= \Big\{P\in\mathcal{P} : \sum\limits_{x\in \text{Supp}(\mathbb{L})} P(x)f_i(x) = a_i, \quad i=1,\dots,k\Big\}.
	\end{equation}
	Let $\mathcal{F}$ be the subspace of $\mathbb{R}^{|\text{Supp}(\mathbb{L})|}$ spanned by the $k$ vectors $f_1(\cdot)-a_1, \dots, f_k(\cdot)-a_k$. Then every $P\in \mathbb{L}$ can be thought of a $|\text{Supp}(\mathbb{L})|$-dimensional vector in $\mathcal{F}^{\perp}$. Hence $\mathcal{F}^{\perp}$ is a subspace of $\mathbb{R}^{|\text{Supp}(\mathbb{L})|}$ that contains a vector whose components are strictly positive as $P^*\in\mathbb{L}$. It follows that $\mathcal{F}^{\perp}$ is spanned by its	probability vectors. From (\ref{1eqn:pythagorean_equivalent}), we see that (\ref{1eqn:pythagorean_equality}) is equivalent to
	\begin{equation}
	\label{1eqn:pythagorean_equality_equivalent}
	\sum\limits_{x\in \mathcal{X}} \big[P(x)-P^*(x)\big]\big[P^*(x)^{\alpha-1}-Q(x)^{\alpha-1}\big] = 0\quad\forall P\in\mathbb{L}.
	\end{equation}
	This implies that the vector
	\begin{equation*}
	P^*(\cdot)^{\alpha-1} - Q(\cdot)^{\alpha-1} - \sum\limits_x P^*(x)\big[P^*(x)^{\alpha-1}-Q(x)^{\alpha-1}
	\big] \in (\mathcal{F}^\perp)^\perp=\mathcal{F}.
	\end{equation*}	
	Hence
	\begin{eqnarray*}
		P^*(x)^{\alpha-1} - Q(x)^{\alpha-1} - \sum\limits_x P^*(x)\big[P^*(x)^{\alpha-1} - Q(x)^{\alpha-1}\big]
		= \sum\limits_{i=1}^k c_i\big[f_i(x)-a_i\big]\\
		\forall x\in \text{Supp}(\mathbb{L}),
	\end{eqnarray*}
	for some scalars $c_1,\dots,c_k$. This implies (\ref{1eqn:forward_projection_alpha<1}) for appropriate choices of $Z$ and $\theta_1,\dots,\theta_k$.
	\vspace*{0.2cm}
	
	(b) The proof of this case is similar to the proof for relative $\alpha$-entropy \cite[Th. 14 (b)]{KumarS15J2}. The optimization problem involved in the forward $B_\alpha$-projection is
	\begin{align}
	\min_P \, & B_{\alpha}(P,Q)\label{1eqn:min}\\
	\mbox{subject to } &\sum\limits_x P(x)f_i(x) = a_i, \quad i=1, \dots, k,\label{1eqn:linear_constraints}\\
	&\sum\limits_x P(x)       = 1, \label{1eqn:probability_constraint}\\
	&P(x)                    \ge 0 \quad\forall x \in \mathcal{X}. \label{1eqn:positivity_constraints}
	\end{align}
	Hence, by \cite[Prop.~3.3.7]{Bertsekas03B}, there exists Lagrange multipliers $\lambda_1,\dots,\lambda_k$, 
	$\nu$, and $(\mu(x),x\in \mathcal{X})$ respectively associated with the above constraints such that, for $x\in \mathcal{X}$,
	\begin{eqnarray}
	\label{1eqn:lagrange}
	\dfrac{\partial}{\partial P(x)}B_\alpha(P,Q)\Big|_{P=P^*} & = & \sum\limits_{i=1}^k \lambda_i [f_i(x) - a_i] + \mu(x) - \nu,\\
	\label{1eqn:feasibility_condition}
	\mu(x) & \ge & 0,\\
	\label{1eqn:slackness_condition}
	\mu(x)P^*(x) & = & 0.
	\end{eqnarray}
	Since
	\begin{equation}
	\label{1eqn:partial_derivative}
	\dfrac{\partial}{\partial P(x)}B_\alpha(P,Q)=\dfrac{\alpha}{\alpha-1}
	\big[P(x)^{\alpha-1}-Q(x)^{\alpha-1}\big],
	\end{equation}
	(\ref{1eqn:lagrange}) can be re-written as
	\begin{eqnarray}
	\label{1eqn:lagrange1}
	\dfrac{\alpha}{\alpha-1}\big[P^*(x)^{\alpha-1} - Q(x)^{\alpha-1}\big] = \sum\limits_{i=1}^k\lambda_i\big[f_i(x)-a_i\big] + \mu(x) - \nu\quad \text{ for } x\in \mathcal{X}.
	\end{eqnarray}
	Multiplying both sides by $P^*(x)$ and summing over all $x\in \mathcal{X}$, we get
	\begin{equation*}
	\nu = \dfrac{\alpha}{\alpha-1}\sum\limits_{x\in \mathcal{X}} P^*(x)\big[ Q(x)^{\alpha-1}-P^*(x)^{\alpha-1}\big].
	\end{equation*}
	For $x\in \text{Supp}(P^*)$, from (\ref{1eqn:slackness_condition}), we must have $\mu(x)=0$. Then, from (\ref{1eqn:lagrange1}), we have
	\begin{eqnarray}
	\label{1eqn:projection_alpha>1_equivalent1}
	P^*(x)^{\alpha-1} = Q(x)^{\alpha-1} + \dfrac{\alpha-1}{\alpha} \sum\limits_{i=1}^k\lambda_i
	\big[f_i(x)-a_i\big] - \dfrac{\alpha-1}{\alpha}\nu.
	\end{eqnarray}
	If $P^*(x)=0$, from (\ref{1eqn:lagrange1}), we get
	\begin{equation}
	\label{1eqn:projection_alpha>1_equivalent2}
	Q(x)^{\alpha-1} + \dfrac{\alpha-1}{\alpha} \sum\limits_{i=1}^k\lambda_i\big[f_i(x)-a_i\big] - \dfrac{\alpha-1}{\alpha}\nu = -
	\dfrac{\alpha-1}{\alpha}\mu(x)\leq 0.
	\end{equation}
	Combining (\ref{1eqn:projection_alpha>1_equivalent1}) and (\ref{1eqn:projection_alpha>1_equivalent2}) we get (\ref{1eqn:forward_projection_alpha>1}).
\end{proof}
\begin{remark}
	The expression for $P^*$ in (\ref{1eqn:forward_projection_alpha<1}) (or in (\ref{1eqn:forward_projection_alpha>1})) is given with the multiplicative factor $1-\alpha$ in the square brackets so that the formula in (\ref{1eqn:forward_projection_alpha<1}) can be re-written, with the help of {\em $\alpha$-exponential} and {\em $\alpha$-logarithmic} functions \cite[Defn.~7]{KumarS15J2}, as $P^*(x)^{-1} = e_\alpha\big[\ln_\alpha(Q(x)^{-1}) + Z +  \small\sum\limits_{i=1}^k\theta_i f_i(x)\big]$, analogous to the exponential probability measure (see (\ref{1eqn:exponential_family_formula})).
\end{remark}
Theorem \ref{1thm:forward_projection} suggests us to define a parametric family of probability measures that extends the usual exponential family. We first formally define this family and then show an orthogonality relationship between this family and the linear family. As a consequence we will also show that the reverse $B_{\alpha}$-projection on this generalized exponential family is same as a forward projection on a linear family.
\begin{definition}
	\label{1defn:non_normalized_alpha_powerlaw_family}
	The {\em non-normalized $\alpha$-power-law family}\footnote{This term is coined due to the fact that this family differs from the $\alpha$-power-law family only in the normalizing constant.}, $\mathbb{B}^{(\alpha)} := \mathbb{B}^{(\alpha)}(Q, f, \Theta)$, characterized by a $Q\in\mathcal{P}$, $k$ real valued functions $f_i, i=1,\dots, k$ on $\mathcal{X}$, and a parameter space $\Theta\subset\mathbb{R}^k$, is given by $\mathbb{B}^{(\alpha)} = \{P_\theta : \theta\in\Theta\}\subset\mathcal{P}$, where
	\begin{eqnarray}
	\label{1eqn:B_alpha_family_formula}
	P_{\theta}(x) = \big[Q(x)^{\alpha-1} + (1-\alpha) \big\{Z(\theta) + \theta^Tf(x)\big\}\big]^{\frac{1}{\alpha-1}} \quad\text{for } x\in\mathcal{X}.
	\end{eqnarray}
\end{definition}

\begin{remark}
	\label{1rem:B_alpha_family}
	\begin{itemize}
		\item[(a)] Observe that $\mathbb{B}^{(\alpha)}$ is a special case of the family $\mathcal{F}_{[\beta h]}$ in \cite[Eq.~(28)]{CsiszarM12J} when $h(\cdot) = Q(\cdot)$ and $\beta(\cdot, t) = \frac{1}{\alpha -1}[t^{\alpha}-\alpha t +\alpha -t]$.
		\vspace*{0.2cm}
		
		\item[(b)] As in the case of $\mathcal{E}$, $\mathbb{M}^{(\alpha)}$, and $\mathscr{E}_\alpha$ families, the family $\mathbb{B}^{(\alpha)}$ depends on the reference measure $Q$ only in a loose manner in the sense that any other member of $\mathbb{B}^{(\alpha)}$ can play the role of $Q$. The change of reference measure only corresponds to a translation of the parameter space.
	\end{itemize}
\end{remark}

The following theorem and its corollary together establish an ``orthogonality" relationship between the non-normalized $\alpha$-power-law family and its associated linear family.
\begin{theorem}
	\label{1thm:orthogonality1}
	Let $\alpha<1$. Let $\mathbb{B}^{(\alpha)}$ be the non-normalized $\alpha$-power-law family as in Definition \ref{1defn:non_normalized_alpha_powerlaw_family} and $\mathbb{L}$ be the corresponding linear family determined by the same functions $f_i, i = 1,\dots, k$ and some constants $a_i,i=1,\dots,k$ as in (\ref{1eqn:linear_family}). If $P^*$ is the forward $B_\alpha$-projection of $Q$ on $\mathbb{L}$, then we have the following:
	\begin{enumerate}
		\item[(a)] $\mathbb{L}\cap \text{cl}(\mathbb{B}^{(\alpha)}) = \{ P^*\}$ and
		\begin{eqnarray}
		\label{1eqn:pythagorean_equality1}
		B_\alpha (P,Q) = B_\alpha (P,P^*) + B_\alpha(P^*,Q)\quad \forall P\in \mathbb{L}.
		\end{eqnarray}
		\item[(b)] Further, if $\text{Supp}(\mathbb{L}) = \mathcal{X}$, then $\mathbb{L}\cap\mathbb{B}^{(\alpha)}=\{ P^*\}$.
	\end{enumerate}
\end{theorem}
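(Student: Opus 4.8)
The plan is to mimic the classical Kullback--Leibler/exponential-family argument, with the ``orthogonality'' reduced to the vanishing of a single inner product coming from the linear constraints defining $\mathbb{L}$. Throughout I would use that, by the preceding theorems, the forward $B_\alpha$-projection $P^*$ of $Q$ on $\mathbb{L}$ satisfies the Pythagorean equality (\ref{1eqn:pythagorean_equality}), equivalently the orthogonality relation (\ref{1eqn:pythagorean_equality_equivalent}), and that by Theorem \ref{1thm:forward_projection}(a) this $P^*$ has the non-normalized $\alpha$-power-law form (\ref{1eqn:forward_projection_alpha<1}) on $\text{Supp}(\mathbb{L})$.

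First I would record the key algebraic identity underlying the orthogonality. For any $R\in\mathbb{B}^{(\alpha)}$, (\ref{1eqn:B_alpha_family_formula}) gives $R(x)^{\alpha-1}-Q(x)^{\alpha-1}=(1-\alpha)[Z(\theta)+\theta^Tf(x)]$, so for any $P\in\mathbb{L}$ and any $R\in\mathbb{B}^{(\alpha)}\cap\mathbb{L}$,
$$\sum_{x\in\mathcal{X}}[P(x)-R(x)]\big[R(x)^{\alpha-1}-Q(x)^{\alpha-1}\big]=(1-\alpha)\Big[Z(\theta)\sum_x[P(x)-R(x)]+\theta^T\sum_x[P(x)-R(x)]f(x)\Big]=0,$$
because both $P$ and $R$ have total mass one and share the moments $a_i$. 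Since the difference $B_\alpha(P,Q)-B_\alpha(P,R)-B_\alpha(R,Q)$ equals $\tfrac{\alpha}{\alpha-1}$ times this inner product, exactly as in (\ref{1eqn:pythagorean_equality_equivalent}), this shows that the Pythagorean equality $B_\alpha(P,Q)=B_\alpha(P,R)+B_\alpha(R,Q)$ holds for every $P\in\mathbb{L}$ whenever the apex $R$ lies in $\mathbb{B}^{(\alpha)}\cap\mathbb{L}$.

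With this in hand, part (a) splits into membership and uniqueness. For membership, $P^*\in\mathbb{L}$ by definition, and by Theorem \ref{1thm:forward_projection}(a) it has the power-law form on $\text{Supp}(\mathbb{L})$ while vanishing off it; since genuine members of $\mathbb{B}^{(\alpha)}$ have full support, $P^*$ is in general only a limit of such members (drive $\theta^Tf$ to $+\infty$ on $\mathcal{X}\setminus\text{Supp}(\mathbb{L})$), hence $P^*\in\text{cl}(\mathbb{B}^{(\alpha)})$; the equality (\ref{1eqn:pythagorean_equality1}) is then just (\ref{1eqn:pythagorean_equality}). For uniqueness, take any $\tilde P\in\mathbb{L}\cap\text{cl}(\mathbb{B}^{(\alpha)})$ and apply the Pythagorean equality twice: once with apex $P^*$ and $P=\tilde P$, giving $B_\alpha(\tilde P,Q)=B_\alpha(\tilde P,P^*)+B_\alpha(P^*,Q)$, and once with apex $\tilde P$ and $P=P^*$, giving $B_\alpha(P^*,Q)=B_\alpha(P^*,\tilde P)+B_\alpha(\tilde P,Q)$. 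Adding the two cancels $B_\alpha(P^*,Q)+B_\alpha(\tilde P,Q)$ and forces $B_\alpha(\tilde P,P^*)+B_\alpha(P^*,\tilde P)=0$; as both summands are nonnegative divergences they vanish, so $\tilde P=P^*$. Part (b) is then immediate: if $\text{Supp}(\mathbb{L})=\mathcal{X}$ the power-law form of $P^*$ has strictly positive entries, so $P^*\in\mathbb{B}^{(\alpha)}$ itself, and (a) yields $\mathbb{L}\cap\mathbb{B}^{(\alpha)}=\{P^*\}$.

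I expect the one genuinely delicate step to be the second application of the Pythagorean equality in the uniqueness argument, namely justifying that $\tilde P$ is a legitimate apex even though it lies only in $\text{cl}(\mathbb{B}^{(\alpha)})$ rather than $\mathbb{B}^{(\alpha)}$. The algebraic identity above was derived for $R\in\mathbb{B}^{(\alpha)}$, and passing to the closure is subtle because $R(x)^{\alpha-1}$ blows up (as $\alpha<1$) at points where $\tilde P$ vanishes. The hard part is therefore to show that any $\tilde P\in\mathbb{L}\cap\text{cl}(\mathbb{B}^{(\alpha)})$ still has $\text{Supp}(\tilde P)=\text{Supp}(\mathbb{L})$ and retains the form (\ref{1eqn:forward_projection_alpha<1}) on that support, so that the inner product is taken only over $\text{Supp}(\mathbb{L})$ and the offending terms never arise; this I would establish by extracting the limiting parameters from an approximating sequence $P_{\theta_n}\to\tilde P$ and using that $Z(\theta_n)+\theta_n^Tf(x)$ converges precisely on $\text{Supp}(\tilde P)$.
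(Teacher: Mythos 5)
Your overall skeleton matches the paper's, but there is one genuine gap, and it sits at the crux of part (a): the claim that $P^*\in\text{cl}(\mathbb{B}^{(\alpha)})$. You assert this in a parenthesis (``drive $\theta^Tf$ to $+\infty$ on $\mathcal{X}\setminus\text{Supp}(\mathbb{L})$''), but that construction is not available for free. To push mass to zero exactly off $\text{Supp}(\mathbb{L})$ while leaving the values on $\text{Supp}(\mathbb{L})$ untouched, you need an affine function $c+\lambda^Tf$ that vanishes identically on $\text{Supp}(\mathbb{L})$ and is strictly positive off it; the existence of such a $(\lambda,c)$ is a separation/LP-duality fact about the polytope $\mathbb{L}$ that you neither state nor prove. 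Even granting it, the resulting measures $\big[Q(x)^{\alpha-1}+(1-\alpha)\{Z+tc+(\theta+t\lambda)^Tf(x)\}\big]^{1/(\alpha-1)}$ have total mass strictly greater than $1$, so they are not members of $\mathbb{B}^{(\alpha)}$ as defined; you must re-normalize and show that the normalizing constant converges, which needs a further monotonicity argument. This is exactly the hole that step (ii) of the paper's proof is designed to fill, and it does so by a different, cleaner device: it projects $Q$ onto perturbed linear families $\mathbb{L}_n$ with constants $(1-\tfrac1n)a_i+\tfrac1n\sum_xQ(x)f_i(x)$, which have full support by construction (they contain $(1-\tfrac1n)P+\tfrac1nQ$), so that Theorem \ref{1thm:forward_projection}(a) places the projections $P_n^*$ genuinely inside $\mathbb{B}^{(\alpha)}$ with the normalization built in; a convergent subsequence then yields an element of $\mathbb{L}\cap\text{cl}(\mathbb{B}^{(\alpha)})$. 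Without either your separation argument made rigorous or the paper's perturbation device, part (a) is not proved.

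The rest of your proposal is essentially sound and parallels the paper. Your inner-product identity, and the fact that $B_\alpha(P,Q)-B_\alpha(P,R)-B_\alpha(R,Q)=\tfrac{\alpha}{\alpha-1}\sum_x[P(x)-R(x)]\big[R(x)^{\alpha-1}-Q(x)^{\alpha-1}\big]$, is precisely the paper's mechanism; your double application of the Pythagorean identity (forcing $B_\alpha(\tilde P,P^*)+B_\alpha(P^*,\tilde P)=0$) is a fine variant of the paper's implicit and shorter route (an apex $\tilde P$ is a minimizer of $B_\alpha(\cdot,Q)$ on $\mathbb{L}$, hence equals $P^*$ by uniqueness of the forward projection). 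The ``delicate step'' you flag --- legitimizing an apex $\tilde P$ that lies only in the closure --- is real, but the paper dispatches it more simply than your proposed fix: for $P_n\to\tilde P$ with $P_n\in\mathbb{B}^{(\alpha)}$, the identity $\sum_x[P(x)-\tilde P(x)]\big[P_n(x)^{\alpha-1}-Q(x)^{\alpha-1}\big]=0$ holds exactly for every finite $n$ (it uses only $P_n\in\mathbb{B}^{(\alpha)}$ and $P,\tilde P\in\mathbb{L}$, and no parametric form of $\tilde P$), and one then lets $n\to\infty$. The blow-up you worry about resolves itself in this limit: if $\tilde P(x)=0<P(x)$ at some $x$, that term tends to $+\infty$ while no term tends to $-\infty$, contradicting that the sum is identically zero; so the limiting argument simultaneously yields $\text{Supp}(\tilde P)\supseteq\text{Supp}(P)$ and the orthogonality relation, with no need to extract limiting parameters.
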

\begin{proof}
	By Theorem \ref{1thm:forward_projection}, the forward $B_\alpha$-projection $P^*$ of $Q$ on $\mathbb{L}$ is in $\mathbb{B}^{(\alpha)}$. This implies that $P^*\in \mathbb{L}\cap \mathbb{B}^{(\alpha)}$. In general, it would suffice if we prove the following:
	\begin{enumerate}
		\item[(i)] Every $\tilde{P}\in \mathbb{L}\cap \text{cl}(\mathbb{B}^{(\alpha)})$ satisfies (\ref{1eqn:pythagorean_equality}) with $\tilde{P}$ in place of $P^*$.
		\item[(ii)] $\mathbb{L}\cap \text{cl}(\mathbb{B}^{(\alpha)})$ is non-empty.
	\end{enumerate}
	
	We now proceed to prove both (i) and (ii).
	\vspace*{0.2cm}
	
	(i) Let $\tilde{P}\in \mathbb{L}\cap \text{cl}(\mathbb{B}^{(\alpha)})$. As $\tilde{P}\in\text{cl}(\mathbb{B}^{(\alpha)})$, this implies there exists a sequence $\{P_n\}\subset\mathbb{B}^{(\alpha)}$ such that $P_n\rightarrow \tilde{P}$ as $n\to\infty$. Since $P_n\in \mathbb{B}^{(\alpha)}$, we can write
	\begin{equation}
	\label{1eqn:P_n}
	P_n(x)^{\alpha-1} = Q(x)^{\alpha-1} + (1-\alpha)\big[Z_n + \sum\limits_{i=1}^k\theta_i^{(n)}f_i(x)\big]\quad\forall x\in \mathcal{X}
	\end{equation}
	for some constants $\theta^{(n)} = (\theta_1^{(n)},\dots,\theta_k^{(n)})\in \mathbb{R}^k$
	and $Z_n$. Now, for any $P\in \mathbb{L}$ we have, from the definition of linear family, $\sum\limits_{x\in \mathcal{X}}P(x)f_i(x) = a_i, i=1,\dots,k$. Since $\tilde{P}\in \mathbb{L}$, we also have
	$\sum\limits_{x\in \mathcal{X}}\tilde{P} (x)f_i(x) = a_i, i=1,\dots,k$. Multiplying both sides of (\ref{1eqn:P_n}) by $P(\cdot)$ and $\tilde{P}(\cdot)$ separately, we get
	\begin{equation*}
	\sum\limits_{x\in \mathcal{X}} P(x) P_n(x)^{\alpha-1} = \sum\limits_{x\in \mathcal{X}}P(x)Q(x)^{\alpha-1} + (1-\alpha)\Big[Z_n + \sum\limits_{i=1}^k\theta_i^{(n)}a_i\Big]
	\end{equation*}
	and
	\begin{equation*}
	\sum\limits_{x\in \mathcal{X}} \tilde{P}(x) P_n(x)^{\alpha-1} = \sum\limits_{x\in \mathcal{X}}\tilde{P} (x)Q(x)^{\alpha-1} + (1-\alpha)\Big[Z_n + \sum\limits_{i=1}^k\theta_i^{(n)}a_i\Big],
	\end{equation*}
	which imply
	\begin{equation*}
	\sum\limits_{x\in \mathcal{X}}\big[P(x) -\tilde{P}(x)\big]\big[P_n(x)^{\alpha-1} -Q(x)^{\alpha-1}
	\big] = 0.
	\end{equation*}
	As $n\to\infty$, the above becomes
	\begin{equation*}
	\sum\limits_{x\in \mathcal{X}}\big[P(x) -\tilde{P}(x)\big]\big[\tilde{P}(x)^{\alpha-1} - Q(x)^{\alpha-1}\big] = 0,
	\end{equation*}
	which is equivalent to (\ref{1eqn:pythagorean_equality}).
	\vspace*{0.2cm}
	
	(ii) Let $P_n^*$ be the forward $B_\alpha$-projection of $Q$ onto the linear family
	\begin{equation*}
	\mathbb{L}_n :=
	\Big\{P:\sum\limits_{x\in \mathcal{X}} P(x)f_i(x) = \Big(1-\frac{1}{n}\Big)a_i+\frac{1}{n}\sum\limits
	_{x\in \mathcal{X}} Q(x)f_i(x), \quad i = 1,\dots, k\Big\}.
	\end{equation*}
	By construction, $\left(1-\frac{1}{n}\right) P+\frac{1}{n}Q\in \mathbb{L}_n$ for any $P\in \mathbb{L}$. Hence, since $\text{Supp}(Q) = \mathcal{X}$, we have $\text{Supp}(\mathbb{L}_n) = \mathcal{X}$. Since $\mathbb{L}_n$ is also characterized by the same functions $f_i, i = 1,\dots, k$, we have $P^*_n\in \mathbb{B}^{(\alpha)}$ for every $n\in \mathbb{N}$. Hence limit of any convergent subsequence of $\{ P^*_n\}$ belongs to $\text{cl}(\mathbb{B}^{(\alpha)})\cap \mathbb{L}$. Thus $\text{cl}(\mathbb{B}^{(\alpha)})\cap \mathbb{L}$ is non-empty. This completes the proof.
\end{proof}
\begin{corollary}
	\label{1cor:orthogonality}
	Let $\mathbb{L}$ and $\mathbb{B}^{(\alpha)}$ be characterized by the same functions $f_i, i = 1,\dots, k$. Then $\mathbb{L}\cap \text{cl}(\mathbb{B}^{(\alpha)}) = \{P^*\}$ and
	\begin{eqnarray}
	\label{1eqn:orthogonality}
	B_\alpha (P,Q) = B_\alpha (P,P^*)+B_\alpha(P^*,Q)\quad\forall P\in \mathbb{L},\quad\forall Q\in \text{cl}(\mathbb{B}^{(\alpha)}).
	\end{eqnarray}
\end{corollary}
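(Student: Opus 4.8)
The plan is to deduce the corollary from Theorem \ref{1thm:orthogonality1} together with the reparametrization property recorded in Remark \ref{1rem:B_alpha_family}(b). First I would observe that the assertion $\mathbb{L} \cap \text{cl}(\mathbb{B}^{(\alpha)}) = \{P^*\}$ is nothing but Theorem \ref{1thm:orthogonality1}(a), with $P^*$ the forward $B_\alpha$-projection of the defining reference measure onto $\mathbb{L}$. Hence only the Pythagorean equality (\ref{1eqn:orthogonality}) requires fresh work, and its content is that it now holds for \emph{every} $Q \in \text{cl}(\mathbb{B}^{(\alpha)})$ with this \emph{same} $P^*$.

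The engine of the argument is the algebraic identity, obtained by expanding the definition (\ref{1defn:B_alpha_divergence}),
\begin{equation*}
B_\alpha(P, Q) - B_\alpha(P, P^*) - B_\alpha(P^*, Q) = \frac{\alpha}{\alpha-1} \sum_{x\in\mathcal{X}} \big[P(x) - P^*(x)\big]\big[P^*(x)^{\alpha-1} - Q(x)^{\alpha-1}\big],
\end{equation*}
which already appeared implicitly in the passage from (\ref{1eqn:pythagorean_equivalent}) to (\ref{1eqn:Pythagorean_inequality}) in the proof of Theorem \ref{1thm:pythagorean_theorem}. It therefore suffices to show that the right-hand sum vanishes for every $P \in \mathbb{L}$. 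Here I would exploit that $P^*$ and $Q$ both lie in $\text{cl}(\mathbb{B}^{(\alpha)})$: by (\ref{1eqn:B_alpha_family_formula}) any two members of $\mathbb{B}^{(\alpha)}$ differ, in their $(\alpha-1)$-th powers, by an affine combination of $1, f_1, \dots, f_k$. Thus $P^*(\cdot)^{\alpha-1} - Q(\cdot)^{\alpha-1} = c_0 + \sum_{i=1}^k c_i f_i(\cdot)$ for some scalars $c_0, c_1, \dots, c_k$. Substituting, the sum decomposes into $c_0 \sum_x [P(x) - P^*(x)]$ and $\sum_i c_i \sum_x [P(x) - P^*(x)] f_i(x)$; the former vanishes because $P$ and $P^*$ are probability measures, and the latter because $P, P^* \in \mathbb{L}$ satisfy $\sum_x P(x) f_i(x) = \sum_x P^*(x) f_i(x) = a_i$ for each $i$. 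This yields (\ref{1eqn:orthogonality}).

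I expect the main obstacle to be the boundary $\text{cl}(\mathbb{B}^{(\alpha)}) \setminus \mathbb{B}^{(\alpha)}$, where the affine representation of $Q(\cdot)^{\alpha-1}$ can break down at points where $Q$ vanishes (and, for $\alpha<1$, $B_\alpha(\cdot, Q)$ may even be infinite). To handle this I would restrict the defining sum to $\text{Supp}(\mathbb{L}) = \text{Supp}(P^*)$, the equality of supports being supplied by Theorem \ref{1thm:pythagorean_theorem} for $\alpha<1$; on this set $P^*$ is strictly positive, while for $x \notin \text{Supp}(\mathbb{L})$ the corresponding terms drop out since $P(x) = P^*(x) = 0$ there. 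Alternatively, one may approximate $Q$ by a sequence in $\mathbb{B}^{(\alpha)}$ and pass to the limit, invoking Remark \ref{1rem:B_alpha_family}(b) — which lets any member of $\mathbb{B}^{(\alpha)}$ play the role of reference, so that Theorem \ref{1thm:orthogonality1} applies to each approximant verbatim — together with continuity of $B_\alpha$.
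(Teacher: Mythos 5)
Your core computation for $Q$ inside the family itself is sound, and it is essentially an unfolding of what the paper obtains by citation. The algebraic identity
\begin{equation*}
B_\alpha(P,Q) - B_\alpha(P,P^*) - B_\alpha(P^*,Q) \;=\; \frac{\alpha}{\alpha-1}\sum_{x\in\mathcal{X}}\big[P(x)-P^*(x)\big]\big[P^*(x)^{\alpha-1}-Q(x)^{\alpha-1}\big]
\end{equation*}
is correct, and for $Q\in\mathbb{B}^{(\alpha)}$ the right-hand side vanishes by affinity of the $(\alpha-1)$-th powers together with the moment constraints defining $\mathbb{L}$. One small repair: $P^*$ need not lie in $\mathbb{B}^{(\alpha)}$, so the affine representation of $P^*(\cdot)^{\alpha-1}$ is not ``membership in the family'' but Theorem \ref{1thm:forward_projection}(a), which supplies it exactly on $\text{Supp}(\mathbb{L})$; your restriction of the sum to $\text{Supp}(\mathbb{L})=\text{Supp}(P^*)$, where the omitted terms have $P(x)=P^*(x)=0$, makes this legitimate. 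The paper instead invokes Remark \ref{1rem:B_alpha_family}(b) --- any member of $\mathbb{B}^{(\alpha)}$ can serve as the reference measure --- so that Theorem \ref{1thm:orthogonality1} applies verbatim to every $Q\in\mathbb{B}^{(\alpha)}$ with the same projection $P^*$; the two routes are equivalent in substance, yours being more self-contained.

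The genuine problem is your \emph{primary} fix for $Q\in\text{cl}(\mathbb{B}^{(\alpha)})\setminus\mathbb{B}^{(\alpha)}$. Restricting the sum to $\text{Supp}(\mathbb{L})=\text{Supp}(P^*)$ neutralizes only the zeros of $P^*$; the dangerous zeros are those of $Q$, and nothing prevents a boundary point $Q$ from vanishing at some $x_0\in\text{Supp}(\mathbb{L})$. For $\alpha<1$ this makes $B_\alpha(P^*,Q)=\infty$, and $B_\alpha(P,Q)=\infty$ for every $P\in\mathbb{L}$ charging $x_0$, so your identity degenerates to $\infty-\infty$ and the affine representation of $Q(\cdot)^{\alpha-1}$ fails at $x_0$ even though $x_0$ lies in $\text{Supp}(\mathbb{L})$. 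Worse, for a $P\in\mathbb{L}$ with $P\ll Q$ (if one existed) the left side of (\ref{1eqn:orthogonality}) would be finite while the right side is infinite; so the corollary implicitly asserts that no such $P$ exists, a fact the support restriction cannot deliver. All of this is resolved by what you list as the ``alternative'': approximate $Q$ by $Q_n\in\mathbb{B}^{(\alpha)}$, apply the interior case to each $Q_n$ with the same $P^*$ (uniqueness of $\mathbb{L}\cap\text{cl}(\mathbb{B}^{(\alpha)})$ plus Remark \ref{1rem:B_alpha_family}(b)), and let $n\to\infty$ using continuity of $Q\mapsto B_\alpha(P,Q)$ as a map into $[0,\infty]$, so the identity survives even when both sides become $+\infty$. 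That limiting argument is exactly the paper's proof, and in your write-up it should be promoted from fallback to the main argument for the boundary case.
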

\begin{proof}
	By Theorem \ref{1thm:orthogonality1}, we have $\mathbb{L}\cap \text{cl}(\mathbb{B}^{(\alpha)}) = \{P^*\}$. In view of Remark \ref{1rem:B_alpha_family}(b), notice that every member of $\mathbb{B}^{(\alpha)}$ has the same projection on $\mathbb{L}$, namely $P^*$. Hence (\ref{1eqn:orthogonality}) holds for every $Q\in\mathbb{B}^{(\alpha)}$. Thus we only need to prove (\ref{1eqn:orthogonality}) for every $Q\in\text{cl}(\mathbb{B}^{(\alpha)})\setminus\mathbb{B}^{(\alpha)}$. Let $Q\in\text{cl}(\mathbb{B}^{(\alpha)})\setminus\mathbb{B}^{(\alpha)}$. There exists $\{Q_n\}\subset\mathbb{B}^{(\alpha)}$ such that $Q_n\rightarrow Q$. Hence, for any $P\in \mathbb{L}$,
	\begin{equation}
	B_\alpha (P,Q_n) = B_\alpha (P,P^*) + B_\alpha (P^*,Q_n)\quad \forall n\in\mathbb{N}.
	\end{equation}
	Taking limit as $n\to\infty$, we have (\ref{1eqn:orthogonality})\footnote{For a fixed $P$, $Q\mapsto B_\alpha (P, Q)$ is continuous as a function from $\mathcal{P}$ to $[0, \infty]$.}. This completes the proof.	
\end{proof}
The following theorem tells us that a reverse $B_\alpha$-projection on a non-normalized $\alpha$-power-law family can be turned into a forward $B_\alpha$-projection on the associated linear family. We shall refer this as the {\em projection theorem} for the $B_\alpha$-divergence.
\begin{theorem}
	\label{1thm:orthogonality2}
	Let $\alpha<1$. Let $\mathbb{B}^{(\alpha)}$ be as in Definition \ref{1defn:non_normalized_alpha_powerlaw_family} and let $X_1,\dots, X_n$ be an i.i.d. sample drawn according to a particular member of $\mathbb{B}^{(\alpha)}$. Let $\widehat{P}$ be the empirical probability measure of $X_1,\dots, X_n$ and let
	\begin{equation}
	\label{1eqn:empirical_linear_family}
	\widehat{\mathbb{L}} := \Big\{P\in\mathcal{P} : \sum\limits_{x\in \mathcal{X}} P(x)f_i(x) = \bar{f_i}, \quad i=1,\dots,k\Big\},
	\end{equation}
	where $\bar{f_i} = \frac{1}{n}\sum_{j=1}^n f_i(X_j), i=1,\dots,k$. Let $P^*$ be the forward $B_\alpha$-projection of $Q$ on $\widehat{\mathbb{L}}$. Then the following hold.
	\begin{enumerate}
		\item[(i)] If $P^*\in \mathbb{B}^{(\alpha)}$, then $P^*$ is the	reverse $B_\alpha$-projection of $\widehat{P}$ on $\mathbb{B}^{(\alpha)}$.
		\item[(ii)] If $P^*\notin \mathbb{B}^{(\alpha)}$, then $\widehat{P}$ does not have a reverse $B_\alpha$-projection on $\mathbb{B}^{(\alpha)}$. However, $P^*$ is the reverse $B_\alpha$-projection of $\widehat{P}$ on \text{cl}$(\mathbb{B}^{(\alpha)})$.
	\end{enumerate}
\end{theorem}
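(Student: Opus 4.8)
The plan is to reduce the reverse $B_\alpha$-projection on $\mathbb{B}^{(\alpha)}$ to the forward $B_\alpha$-projection on $\widehat{\mathbb{L}}$ by invoking the orthogonality relationship already established in Corollary \ref{1cor:orthogonality}. The key starting observation is that the empirical measure $\widehat{P}$ lies in the linear family $\widehat{\mathbb{L}}$: indeed, $\sum_{x\in\mathcal{X}} \widehat{P}(x) f_i(x) = \frac{1}{n}\sum_{j=1}^n f_i(X_j) = \bar{f_i}$ for each $i$, so $\widehat{P}$ satisfies the defining constraints of $\widehat{\mathbb{L}}$. Since $\widehat{\mathbb{L}}$ and $\mathbb{B}^{(\alpha)}$ are determined by the same functions $f_i$, Corollary \ref{1cor:orthogonality} applies and identifies $P^*$, the forward $B_\alpha$-projection of $Q$ on $\widehat{\mathbb{L}}$, as the unique element of $\widehat{\mathbb{L}}\cap\text{cl}(\mathbb{B}^{(\alpha)})$.

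Next I would feed $P=\widehat{P}\in\widehat{\mathbb{L}}$ into the Pythagorean equality of Corollary \ref{1cor:orthogonality}: for every $Q'\in\text{cl}(\mathbb{B}^{(\alpha)})$,
\begin{equation*}
B_\alpha(\widehat{P}, Q') = B_\alpha(\widehat{P}, P^*) + B_\alpha(P^*, Q').
\end{equation*}
Because the first term on the right is independent of $Q'$ and $B_\alpha(P^*, Q')\ge 0$ with equality if and only if $Q'=P^*$ (as $B_\alpha$ is a genuine divergence), this identity shows that $Q'\mapsto B_\alpha(\widehat{P}, Q')$ is minimized over $\text{cl}(\mathbb{B}^{(\alpha)})$ uniquely at $Q'=P^*$. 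Hence $P^*$ is the reverse $B_\alpha$-projection of $\widehat{P}$ on $\text{cl}(\mathbb{B}^{(\alpha)})$; this single decomposition is the engine for both parts of the theorem.

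The two cases then follow by elementary reasoning about minimizing over a set versus its closure. For part (i), if $P^*\in\mathbb{B}^{(\alpha)}$, then the unique minimizer over the larger set $\text{cl}(\mathbb{B}^{(\alpha)})$ already lies in the smaller set $\mathbb{B}^{(\alpha)}$, so $P^*$ is automatically the reverse $B_\alpha$-projection of $\widehat{P}$ on $\mathbb{B}^{(\alpha)}$. For part (ii), if $P^*\notin\mathbb{B}^{(\alpha)}$, then every $Q'\in\mathbb{B}^{(\alpha)}$ satisfies $Q'\neq P^*$, so the identity above yields $B_\alpha(\widehat{P}, Q') > B_\alpha(\widehat{P}, P^*)$; consequently the infimum of $B_\alpha(\widehat{P}, \cdot)$ over $\mathbb{B}^{(\alpha)}$ equals $B_\alpha(\widehat{P}, P^*)$ but is never attained within $\mathbb{B}^{(\alpha)}$, so no reverse projection exists there, while $P^*$ remains the reverse projection on the closure.

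The routine checks I would fold in are that the forward projection $P^*$ exists (continuity of $B_\alpha(\cdot, Q)$ on the compact set $\widehat{\mathbb{L}}$, with $\text{Supp}(Q)=\mathcal{X}$ ensuring finiteness for $\alpha<1$) and that $Q$ itself is the $\theta=0$ member of $\mathbb{B}^{(\alpha)}$, so that Corollary \ref{1cor:orthogonality} is genuinely in force. I expect essentially all the substance to reside in that corollary; the main conceptual obstacle, namely showing that $\widehat{\mathbb{L}}\cap\text{cl}(\mathbb{B}^{(\alpha)})$ is the single point $P^*$ and that the Pythagorean equality extends to every member of $\text{cl}(\mathbb{B}^{(\alpha)})$, has already been overcome in Theorem \ref{1thm:orthogonality1} and its corollary. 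Granting those, the projection theorem is a short deduction.
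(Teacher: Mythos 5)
Your proposal is correct and follows essentially the same route as the paper's own proof: both rest on the observation that $\widehat{P}\in\widehat{\mathbb{L}}$, then apply Corollary \ref{1cor:orthogonality} to get $\widehat{\mathbb{L}}\cap\text{cl}(\mathbb{B}^{(\alpha)})=\{P^*\}$ together with the Pythagorean identity $B_\alpha(\widehat{P},Q')=B_\alpha(\widehat{P},P^*)+B_\alpha(P^*,Q')$ for all $Q'\in\text{cl}(\mathbb{B}^{(\alpha)})$, so that minimizing $B_\alpha(\widehat{P},\cdot)$ over the closure reduces to minimizing $B_\alpha(P^*,\cdot)$, which is uniquely attained at $P^*$. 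Your case analysis for (i) and (ii), including the non-attainment of the infimum on $\mathbb{B}^{(\alpha)}$ when $P^*\notin\mathbb{B}^{(\alpha)}$, matches the paper's argument.
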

\begin{proof}
	Let us first observe that $\widehat{\mathbb{L}}$ is so constructed that $\widehat{P}\in\widehat{\mathbb{L}}$. Since the families $\widehat{\mathbb{L}}$ and $\mathbb{B}^{(\alpha)}$ are defined by the same functions $f_i$, $i=1,\dots,k$, by Corollary \ref{1cor:orthogonality}, we have $\widehat{\mathbb{L}}\cap \text{cl}(\mathbb{B}^{(\alpha)}) = \{P^*\}$ and 
	\begin{equation}
	\label{1eqn:pythagorean_equality2}
	B_\alpha(\widehat{P},Q) = B_\alpha (\widehat{P},P^*) + B_\alpha(P^*,Q) \quad\forall Q\in \text{cl}(\mathbb{B}^{(\alpha)}).
	\end{equation}
	Hence it is clear that the minimizer of $B_\alpha(\widehat{P}, Q)$ over $Q\in\text{cl}(\mathbb{B}^{(\alpha)})$ is same as the minimizer of $B_\alpha (P^* , Q)$ over $Q\in\text{cl}(\mathbb{B}^{(\alpha)})$ (Notice that this statement is also true with $\text{cl}(\mathbb{B}^{(\alpha)})$ replaced by $\mathbb{B}^{(\alpha)}$). But $B_\alpha (P^* , Q)$ over $Q\in\text{cl}(\mathbb{B}^{(\alpha)})$ is uniquely minimized by $Q = P^*$. Hence, if $P^*\notin \mathbb{B}^{(\alpha)}$, since minimum of $B_\alpha(\widehat{P}, Q)$ over $Q\in\text{cl}(\mathbb{B}^{(\alpha)})$ is same as the minimum of $B_\alpha(\widehat{P}, Q)$ over $Q\in\mathbb{B}^{(\alpha)}$, the later is not attained on $\mathbb{B}^{(\alpha)}$.	
\end{proof}

\begin{remark}
	Theorems \ref{1thm:orthogonality1}, \ref{1thm:orthogonality2}, and Corollary \ref{1cor:orthogonality} continue to hold for $\alpha >1$ as well if attention is restricted to probability measures with strictly positive components and the existence of $P^*$ is guaranteed.
\end{remark}

\section{The Map $\theta\mapsto\Phi(\theta)$ Defined in (\ref{1eqn:expression_of_phi_i_1}) is one-one:}
\label{1sec:phi_is_oneone}
Let us first recall that $\Phi = (\phi_1,\dots,\phi_k)$, where
\begin{equation*}
\phi_i(\theta) = \frac{\mathbb{E}_\theta [f_i(X)]\overline{P_\theta^{\alpha-1}}}{\|P_\theta\|^\alpha}, \quad\text{ for } i = 1,\dots,k.
\end{equation*}
We will show that the determinant of the Jacobian matrix of $\Phi$ is non-zero. Then the claim follows from \cite[Th. 13.6]{Apostol02B}. Let $\partial_j$ denote $\partial/\partial\theta_j$. Then the $(i,j)$th entry of the Jacobian matrix of $\Phi$ is given by
\begin{eqnarray}
\label{1eqn:jacobian_matrix}
\partial_j\phi_i & = & \overline{{P}_\theta^{\alpha-1}}\bigg\{\dfrac{\|P_\theta\|^\alpha\partial_j(\mathbb{E}_\theta[f_i(X)]) - \mathbb{E}_\theta[f_i(X)]\partial_j (\|P_\theta\|^\alpha)}{\|P_\theta\|^{2\alpha}}\bigg\}+\dfrac{\mathbb{E}_\theta[f_i(X)]}{\|P_\theta\|^\alpha} \partial_j(\overline{{P}_\theta^{\alpha-1}})\nonumber\\
& = & \dfrac{\overline{{P_\theta}^{\alpha-1}}}{\|P_\theta\|^\alpha} \partial_j(\mathbb{E}_\theta[f_i(X)]) - \overline{{P}_\theta^{\alpha-1}}\dfrac{\mathbb{E}_\theta[f_i(X)]\partial_j (\|P_\theta\|^\alpha)}{\|P_\theta\|^{2\alpha}} + \dfrac{\mathbb{E}_\theta[f_i(X)]}{\|P_\theta\|^\alpha} \partial_j(\overline{{P}_\theta^{\alpha-1}})\nonumber\\
& = & \dfrac{\overline{{P_\theta}^{\alpha-1}}}{\|P_\theta\|^\alpha} \partial_j(\mathbb{E}_\theta [f_i(X)]) + \dfrac{\mathbb{E}_\theta[f_i(X)]}{\|P_\theta\|^\alpha}\left[-\overline{{P}_\theta^{\alpha-1}}\partial_j\big(\log
\|P_\theta\|^\alpha\big) + \partial_j(\overline{{P}_\theta^{\alpha-1}})\right].\nonumber\\
\end{eqnarray}
We now calculate each of the terms in (\ref{1eqn:jacobian_matrix}).
\begin{eqnarray*}
	\partial_j\big(\mathbb{E}_\theta[f_i(X)]\big) & = &\sum\limits_{x\in\mathcal{X}}\partial_j [P_\theta(x)]f_i(x)\\
	& = & \sum\limits_{x\in\mathcal{X}}\partial_j\Big(Z(\theta)^{-1}\big[Q(x)^{\alpha-1} + (1-\alpha)\theta^T f(x)\big]^{\frac{1}{\alpha-1}}\Big)f_i(x)\\
	& = & -\sum\limits_{x\in\mathcal{X}}Z(\theta)^{-1}P_\theta(x)f_i(x)\partial_j [Z(\theta)]\nonumber\\
	& & - Z(\theta)^{-1}\sum\limits_{x\in\mathcal{X}}\big[Q(x)^{\alpha-1} + (1-\alpha)\theta^T f(x)\big]^{\frac{1}{\alpha-1}-1}f_i(x) f_j(x)\\
	& = & -Z(\theta)^{-1}\mathbb{E}_\theta[f_i(X)] \partial_j [Z(\theta)]\nonumber\\
	& & - Z(\theta)^{-1}\sum\limits_{x\in\mathcal{X}}\big[Z(\theta)^{\alpha-1}P_\theta(x)^{\alpha-1}\big]^{\frac{1}{\alpha-1}-1}f_i(x)f_j(x)\\
	& = & -Z(\theta)^{-1}\mathbb{E}_\theta[f_i(X)] \partial_j[Z(\theta)]\nonumber\\
	& & - Z(\theta)^{1-\alpha}\sum\limits_{x\in\mathcal{X}}P_\theta(x)^\alpha[P_\theta(x)^{1-\alpha}f_i(x)][ P_\theta(x)^{1-\alpha}f_j(x)].
\end{eqnarray*}
\flushbottom
Hence
\begin{eqnarray}
\label{1eqn:1st_term}
\dfrac{\overline{{P_\theta}^{\alpha-1}}}{\|P_\theta\|^\alpha}\partial_j(\mathbb{E}_\theta[f_i(X)]) & = & -\overline{{P_\theta}^{\alpha-1}}Z(\theta)^{-1}\partial_j [Z(\theta)]\mathbb{E}_{\theta^{(\alpha)}}[P_\theta^{1-\alpha}(X)f_i(X)]\nonumber\\
& & - Z(\theta)^{1-\alpha}\overline{{P_\theta}^{\alpha-1}}\mathbb{E}_{\theta^{(\alpha)}} [P_\theta^{1-\alpha}(X)f_i(X)P_\theta^{1-\alpha}(X)f_j(X)],\nonumber\\
\end{eqnarray}
where we used the fact that
\begin{equation*}
\frac{\mathbb{E}_\theta[f_i(X)]}{\|P_\theta\|^\alpha} = \small\mathbb{E}_{\theta^{(\alpha)}}[P_\theta^{1-\alpha}(X)f_i(X)].
\end{equation*}
Similarly
\begin{eqnarray}
\label{1eqn:2nd_term}
\partial_j\big[\log\|P_\theta\|^\alpha\big] & = &\partial_j\Big[\log\small\sum\limits_{x\in\mathcal{X}}P_\theta(x)^\alpha\Big]\nonumber\\
& = & \partial_j\Big[\log\Big(Z(\theta)^{-\alpha}\sum\limits_{x\in\mathcal{X}}\big[Q(x)^{\alpha-1}+(1-\alpha)\theta^Tf(x)\big]^{\frac{\alpha}{\alpha-1}}\Big)\Big]\nonumber\\
& = & \partial_j\big(\log\big[Z(\theta)^{-\alpha}\big]\big) + \partial_j\Big(\log\sum\limits_{x\in\mathcal{X}}\big[Q(x)^{\alpha-1} + (1-\alpha)\theta^Tf(x)\big]^{\frac{\alpha}{\alpha-1}}\Big)\nonumber\\
& = & -\alpha Z(\theta)^{-1}\partial_j[Z(\theta)] - \dfrac{\alpha\sum\limits_{x\in\mathcal{X}} \big[Q(x)^{\alpha-1}+(1-\alpha)\theta^Tf(x)\big]^{\frac{1}{\alpha-1}}f_j(x)}{\sum\limits_{x\in\mathcal{X}}\big[Q(x)^{\alpha-1}+(1-\alpha)\theta^Tf(x)\big]
	^{\frac{\alpha}{\alpha-1}}}\nonumber\\
& = & -\alpha Z(\theta)^{-1}\partial_j[Z(\theta)]-\dfrac{\alpha}{Z(\theta)^\alpha\|P_\theta\|^\alpha}Z(\theta)\sum\limits_{x\in\mathcal{X}}P_\theta(x)f_j(x)\nonumber\\
& = & -\alpha Z(\theta)^{-1}\partial_j[Z(\theta)] - \alpha Z(\theta)^{1-\alpha}\dfrac{\mathbb{E}_\theta[f_j(X)]}{\|P_\theta\|^\alpha}.
\end{eqnarray}
Also
\begin{eqnarray}
\partial_j(\overline{{P}_\theta^{\alpha-1}}) & = & \partial_j\big(Z(\theta)^{1-\alpha}\big[\overline{Q^{\alpha-1}} + (1-\alpha)\theta^T\bar{f}\big]\big)\nonumber\\
& = & (1-\alpha)Z(\theta)^{-\alpha}\partial_j[Z(\theta)]\big[\overline{Q^{\alpha-1}}+(1-\alpha)\theta^T\bar{f}\big] + (1-\alpha)Z(\theta)^{1-\alpha}\bar{f}_j\nonumber\\
& = & (1-\alpha)Z(\theta)^{-1}\overline{{P_\theta}^{\alpha-1}} \partial_j[Z(\theta)] + (1-\alpha)Z(\theta)^{1-\alpha}\dfrac{\mathbb{E}_\theta[f_j(X)]\overline{{P}_\theta^{\alpha-1}}}{\|P_\theta\|^\alpha},\label{1eqn:3rd_term}\nonumber\\
\end{eqnarray}
where in (\ref{1eqn:3rd_term}) we used the fact that
\begin{equation*}
\bar{f}_j = \dfrac{\mathbb{E}_\theta[f_j(X)]\overline{{P}_\theta^{\alpha-1}}}{\|P_\theta\|^\alpha}
\end{equation*}
by (\ref{1eqn:estimating_equation_I_alpha_simplified}) and (\ref{1eqn:expression_of_phi_i_1}).

Using (\ref{1eqn:2nd_term}) and (\ref{1eqn:3rd_term}), the last two terms in (\ref{1eqn:jacobian_matrix}) together yield
\begin{eqnarray}
\label{1eqn:2nd_and_3rd_term_final}
\lefteqn{\dfrac{\mathbb{E}_\theta[f_i(X)]}{\|P_\theta\|^\alpha}\left[-\overline{{P}_\theta^{\alpha-1}}\partial_j\big(\log\|P_\theta\|^\alpha\big)+\partial_j\overline{{P}_\theta^{\alpha-1}}\right]}\nonumber\\
& & = \dfrac{\mathbb{E}_\theta[f_i(X)]}{\|P_\theta\|^\alpha}\Bigg[\alpha \overline{{P}_\theta^{\alpha-1}}Z(\theta)^{-1} \partial_j[Z(\theta)] +\alpha\overline{{P}_\theta^{\alpha-1}}Z(\theta)^{1-\alpha}\dfrac{\mathbb{E}_\theta[f_j(X)]}{\|P_\theta\|^\alpha}\nonumber\\
& & \hspace{1cm} + (1-\alpha)Z(\theta)^{-1}\overline{{P_\theta}^{\alpha-1}} \partial_j[Z(\theta)] + (1-\alpha)Z(\theta)^{1-\alpha}\dfrac{\mathbb{E}_\theta[f_j(X)]\overline{{P}_\theta^{\alpha-1}}}{\|P_\theta\|^\alpha}\Bigg]\nonumber\\
& & = Z(\theta)^{-1}\overline{{P_\theta}^{\alpha-1}}\partial_j[Z(\theta)]\dfrac{\mathbb{E}_\theta[f_i(X)]}{\|P_\theta\|^\alpha} + Z(\theta)^{1-\alpha}\overline{{P}_\theta^{\alpha-1}}\dfrac{\mathbb{E}_\theta[f_i(X)] }{\|P_\theta\|^\alpha}\dfrac{\mathbb{E}_\theta[f_j(X)]}{\|P_\theta\|^\alpha}\nonumber\\
& & = Z(\theta)^{-1}\overline{{P_\theta}^{\alpha-1}}\partial_j[Z(\theta)]\mathbb{E}_{\theta^{(\alpha)}}[P_\theta^{1-\alpha}(X)f_i(X)]\nonumber\\
& & \hspace{1cm} + Z(\theta)^{1-\alpha}\overline{{P}_\theta^{\alpha-1}}\mathbb{E}_{\theta^{(\alpha)}}[P_\theta^{1-\alpha
}(X)f_i(X)]\mathbb{E}_{\theta^{(\alpha)}}[P_\theta^{1-\alpha}(X)f_j(X)].
\end{eqnarray}
Substituting (\ref{1eqn:1st_term}) and (\ref{1eqn:2nd_and_3rd_term_final}) in (\ref{1eqn:jacobian_matrix}), we get
\begin{eqnarray}
\label{1eqn:jacobian_matrix_final}
\partial_j\phi_i & = & -\overline{{P_\theta}^{\alpha-1}}Z(\theta)^{-1}\partial_j [Z(\theta)]\mathbb{E}_{\theta^{(\alpha)}}[P_\theta^{1-\alpha}(X)f_i(X)]\nonumber\\
& & - Z(\theta)^{1-\alpha}\overline{{P_\theta}^{\alpha-1}}\mathbb{E}_{\theta^{(\alpha)}} [P_\theta^{1-\alpha}(X)f_i(X)P_\theta^{1-\alpha}(X)f_j(X)]\nonumber\\
& &  + Z(\theta)^{-1}\overline{{P_\theta}^{\alpha-1}}\partial_j[Z(\theta)]\mathbb{E}_{\theta^{(\alpha)}}[P_\theta^{1-\alpha}(X)f_i(X)]\nonumber\\
& &  + Z(\theta)^{1-\alpha}\overline{{P}_\theta^{\alpha-1}}\mathbb{E}_{\theta^{(\alpha)}}[P_\theta^{1-\alpha}(X)f_i(X)]\mathbb{E}_{\theta^{(\alpha)}}[P_\theta^{1-\alpha}(X)f_j(X)]\nonumber\\
& = & - Z(\theta)^{1-\alpha}\overline{{P}_\theta^{\alpha-1}}\mathbb{E}_{\theta^{(\alpha)}} [P_\theta^{1-\alpha}(X)f_i(X)P_\theta^{1-\alpha}(X)f_j(X)]\nonumber\\
& &  + Z(\theta)^{1-\alpha}\overline{{P}_\theta^{\alpha-1}}\mathbb{E}_{\theta^{(\alpha)}}[P_\theta^{1-\alpha}(X)f_i(X)]\mathbb{E}_{\theta^{(\alpha)}}[P_\theta^{1-\alpha}(X)f_j(X)]\nonumber\\
& = & - Z(\theta)^{1-\alpha}\overline{{P}_\theta^{\alpha-1}} \text{Cov}_{\theta^{(\alpha)}}[P_\theta^{1-\alpha}(X)f_i(X),P_\theta^{1-\alpha}(X)f_j(X)].
\end{eqnarray}
Let $A$ denote the above covariance matrix, that is, the $(i,j)$th entry of $A$ is
\begin{equation*}
\text{Cov}_{\theta^{(\alpha)}}[P_\theta^{1-\alpha}(X)f_i(X),P_\theta^{1-\alpha}(X)f_j(X)].
\end{equation*} 
Then
$A$ is symmetric and positive semi-definite. Let $B$ denote the Jacobian matrix of $\Phi$.
Then, from (\ref{1eqn:jacobian_matrix_final}), we have 
$B = - Z(\theta)^{1-\alpha}\overline{{P}_\theta^{\alpha-1}} A$. So, $B$ is also symmetric but negative
semi-definite. If we can show that $B$ is negative definite then then the determinant of $B$ is non-zero. Thus it suffices to show that $A$ is positive definite. Suppose that $A$ is not positive definite. Then there exists a non-zero
vector $c = (c_1, \dots, c_k)^T$ such that
\begin{eqnarray}
c^T A c = 0.
\end{eqnarray}
This implies 
\begin{eqnarray}
\text{Var}_{\theta^{(\alpha)}} [c_1 P_\theta (X)^{1 - \alpha} f_1 (X) + \cdots +
c_k P_\theta (X)^{1 - \alpha} f_k (X)] = 0.
\end{eqnarray}
Therefore,
\begin{equation*}
c_1 P_\theta (X)^{1 - \alpha} f_1 (X) + \cdots + c_k P_\theta (X)^{1 - \alpha} f_k (X) = t,
\end{equation*}
where $t = \mathbb{E}_{\theta ^{(\alpha)}} [c_1 P_\theta (X)^{1 - \alpha} f_1 (X) + \cdots +
c_k P_\theta (X)^{1 - \alpha} f_k (X)]$. 
\begin{equation}
\label{1eqn:linear_combination_of_f_i}
c_1 f_1 (\cdot) + \cdots + c_k f_k (\cdot) = t  P_\theta (\cdot)^{\alpha-1}.
\end{equation}
As $P_\theta \in \mathbb{M}^{(\alpha)}$, we have
\begin{equation*}
P_\theta (\cdot)^{\alpha-1} = Z(\theta)^{1-\alpha} [Q(\cdot)^{\alpha -1} + (1 - \alpha)\theta ^T f (\cdot)],
\end{equation*}
for some $\theta = (\theta_1,\dots, \theta_k)\in \Theta$. Substituting this in (\ref{1eqn:linear_combination_of_f_i}), we get
\begin{eqnarray}
\label{1eqn:linear_combination_f_i_and_Q}
[c_1 - (1-\alpha)t \theta _1 Z(\theta)^{1 - \alpha} ] f_1 (\cdot) + \cdots + 
[c_k - (1-\alpha)t \theta _k Z(\theta)^{1 - \alpha} ] f_k (\cdot) \nonumber\\ 
- t Z(\theta)^{1 - \alpha}  Q(\cdot)^{\alpha -1} = 0.
\end{eqnarray}
Observe that, with out loss of generality, we can assume that the vectors $f_1(\cdot),\dots, f_k(\cdot)$ in the $\alpha$-power-law family are linearly independent. Further, in view of footnote \ref{1ftnote:linear_independence}, we claim that the vector $Q(\cdot)^{\alpha-1}$ is linearly independent of the vectors $f_1(\cdot),\dots, f_k(\cdot)$. If not, let $Q(\cdot)^{\alpha-1} = c_1 f_1(\cdot) + \cdots + c_k f_k (\cdot)$ for some non-zero vector $c = (c_1, \dots, c_k)^T$. 
Then, for any $P\in \mathbb{L}$, $\sum\limits_{x\in\mathcal{X}}P(x)Q(x)^{\alpha-1} = \sum\limits_{x\in\mathcal{X}} P(x)[c_1 f_1(x) + \cdots + c_k f_k (x)] = 0$. This is not possible since $Q$ has full support. Then, from (\ref{1eqn:linear_combination_f_i_and_Q}), we have $t = 0$ and hence $c_1 = \cdots = c_k = 0$, which is a contradiction. Hence $A$ is positive definite. This completes the proof.

\section*{Acknowledgements}
Atin Gayen is supported by an INSPIRE fellowship of the Department of Science and Technology,
Government of India.

\end{document}